\newtheorem{thm}{Theorem}
\newtheorem{prop}[thm]{Proposition}
\newtheorem{lem}[thm]{Lemma}
\newtheorem{remark}{Remark}
\newcommand{\R}{\mathbb{ R}}
\DeclareMathOperator{\pr}{pr} 
\DeclareMathOperator{\diag}{diag} \DeclareMathOperator{\ad}{ad}
\DeclareMathOperator{\Ad}{Ad} \DeclareMathOperator{\tr}{tr}
\title[Rolling balls over spheres in $\R^n$]{ROLLING BALLS OVER SPHERES IN $\R^n $}
\subjclass[2010]{37J60, 37J35, 70H45}
\author[Bo\v zidar Jovanovi\'c]{}
\email{bozaj@mi.sanu.ac.rs}
\begin{document}

\maketitle

\centerline{\scshape Bo\v zidar Jovanovi\'c}
\medskip
{\footnotesize
\centerline{Mathematical Institute SANU}
\centerline{Serbian Academy of Sciences and Arts}
\centerline{Kneza Mihaila 36, 11000 Belgrade, Serbia}
}

\begin{abstract}
We study the rolling of the Chaplygin ball in $\R^n$  over a fixed $(n-1)$--dimensional sphere
without slipping and without slipping and twisting.
The problems can be naturally considered within a framework of
appropriate modifications of the L+R and LR systems -- well
known systems on Lie groups groups with an invariant measure.
In the case of the rolling without slipping and twisting,
we describe the $SO(n)$-Chaplygin reduction to $S^{n-1}$ and prove the Hamiltonization of the reduced system for a special inertia operator.
\end{abstract}

\tableofcontents

\section{Introduction}

Let $(Q,L,\mathcal D)$ be a nonholonomic Lagrangian system, where
$Q$ is a $n$-dimensional manifold, $L: TQ\to \R$ Lagrangian, and $\mathcal D$ nonintegrable
$(n-k)$-dimensional distribution of constraints.  Let
$q=(q_1,\dots,q_n)$ be some local coordinates on $Q$ in which the
constraints are written in the form
\begin{equation}
\sum_{i=1}^n \alpha_i^j(q)\dot q_i=0,\qquad
j=1,\dots,k. \label{constraints}\end{equation}
The motion of  the system
is described by the Lagrange-d'Alembert equations
\begin{equation}
\frac{d}{dt}\frac{\partial L}{\partial \dot q_i}=\frac{\partial
L}{\partial q_i}+ \sum_{j=1}^{k}\lambda_j \alpha^j_i,  \qquad
i=1,\dots,n,\label{Hamilton}
\end{equation}
where the Lagrange multipliers $\lambda_j$ are chosen such that the
solutions $q(t)$ satisfy constraints \eqref{constraints}.
The sum $\sum_{j=1}^{k}\lambda_j \alpha^j_i$
represents the reaction force of the constraints.

The nonholonomic systems, generically, are not Hamiltonian
systems. However, many constructions from the theory of
Hamiltonian systems, such as Noether's theorem and the reduction
of symmetries, apply with certain modifications (e.g, see
\cite{AKN, BKMM, C, CCLM, FS, Jo7, Jo8, Koi, St}). Besides, some systems have an
invariant measure, which puts them rather close to Hamiltonian
systems and allow the integration using the Euler--Jacobi theorem
(e.g., see \cite{AKN}).

The existence of invariant measure for
various nonholomic problems is studied extensively (e.g., see
\cite{FNM, Gr, Jo1, Ko, KO,  ZB}).
The LR systems introduced by Veselov and Veselova \cite{VeVe1, VeVe2} and L+R systems introduced by Kozlov and Fedorov \cite{FeRCD, FeKo} on unimodular Lie groups are one of the basic and remarkable
examples.

The closely related problem is
the Hamiltonization of nonholonomic systems, in particular, after the time reparametrisation by using the Chaplygin reducing multiplier
(e.g., see \cite{BN, BBM, BBM3, BM, CCLM, EKR, FeJo, H, OFBZ, Tat}). In the case of integrability, the dynamics over
regular invariant $m$--dimensional tori, in the original time,
has the form
\begin{equation}
\dot\varphi_1=\omega_1/\Phi(\varphi_1,\dots,\varphi_m),\dots,
\dot\varphi_m=\omega_m/\Phi(\varphi_1,\dots,\varphi_m), \qquad \Phi>0.
\label{Jacobi2}
\end{equation}

Inspired by the study of the rolling of a of a balanced, dynamically asymmetric ball without slipping (after Chaplygin \cite{Ch1} usually called the \emph{Chaplygin ball} or the \emph{marble Chaplygin ball} \cite{EKR}) and without slipping and twisting (referred as the \emph{rubber Chaplygin ball} in \cite{EKR})
over a fixed sphere in $\R^3$, given by Borisov, Fedorov, and Mamaev \cite{BF,BM2,BFM,BM3} and Ehlers and Koiller \cite{EK},
we study the associated nonholonomic problems in $\R^n$:
the rolling of the Chaplygin ball in $\R^n$ over a fixed $(n-1)$--dimensional sphere without slipping (and twisting). The problems can be naturally considered within a framework of
appropriate modifications of the L+R and LR systems, recently introduced in \cite{Jo5}.

Note that $n$--dimensional nonholonomic rigid body problems:
the Veselova problem \cite{FeJo}, the Suslov problem \cite{FeJo2},
the rolling of the rubber Chaplygin ball \cite{Jo3} and the
Chaplygin ball \cite{Jo4} over hyperplane in $\R^n$ (at the zero
level set of the $SO(n-1)$--momentum mapping), for certain inertia
operators, are Hamiltonizable systems. Moreover, all mentioned
models are integrable as well, and a motion over a generic
invariant tori has the form \eqref{Jacobi2}\footnote{The Suslov
problem studied in \cite{FeJo2} is an exception. There, the
invariant manifolds not need to be tori.}. In this paper we prove
that the rolling of the rubber Chaplygin ball over a sphere allows
Chaplygin Hamiltonization, while, however, in general the problem
is not integrable.

For a given
nonintegrable distribution $\mathcal D$ on a Riemannian manifold
$Q$, there is an alternative, important, variational or
sub-Riemannian problem, that is already Hamiltonian. The
variational problem for rolling of a $(n-1)$--sphere on spaces of
constant curvature is studied by Jurdjevic and Zimmerman
\cite{JZ}.

\subsection{Result and outline of the paper}
In Section 2 we consider a motion of the Chaplygin ball
of radius $\rho$ without slipping (the
velocity of the contact point equals zero) over a fixed
sphere in $\R^n$ of radius $\sigma$ in three variants of the problem.
The first one represents the motion of the
ball over outside surface of the fixed sphere, the second one is the rolling over inside surface of the fixed sphere, and the third one is the case where Chaplygin ball represents spherical shell with fixed sphere placed in its interior.
The systems are described in Proposition \ref{SSR}.
In all cases the configuration space is $SO(n)\times S^{n-1}$ and the nonholonomic distribution is diffeomorphic to $TSO(n)\times S^{n-1}$.

It appears that these nonholonomic problems are examples of \emph{$\epsilon$-modified L+R systems} (see \cite{Jo5}) with the parameter
\begin{equation}\label{Epsilon}
\epsilon=\frac{\sigma}{\sigma\pm\rho},
\end{equation}
and we directly
obtain an invariant measure (see Theorem \ref{TC}, item (i)), which takes the simpler form for the inertia operator (Theorem \ref{TC}, item (ii))
\begin{equation}\label{ch-op}
\mathbb I(E_i\wedge E_j)=\frac{Da_ia_j}{D-a_ia_j}E_i\wedge E_j.
\end{equation}
Here $0< a_ia_j <D$, $i,j=1,\dots,n$, $E_1,\dots,E_n$ is the standard base of $\R^n$:
\begin{equation}\label{baza}
E_1=(1,0,\dots,0,0)^T,\dots, E_n=(0,0,\dots,0,1)^T,
\end{equation}
and $D=m\rho^2$, where $m$ and $\rho$ are the mass and the radius of the rolling ball, respectively.

The operator \eqref{ch-op} is introduced in \cite{Jo4} in the study of a related problem of rolling of the Chaplygin ball over a horizontal hyperplane in $\R^n$. Rolling over the horizontal plane can be seen as the limit case, where $\epsilon$ becomes 1, as the radius of the fixed sphere $\sigma$ tends to infinity.
Although we have the Hamiltonization of the system for $\epsilon=1$ (at the zero
level set of the $SO(n-1)$--momentum mapping), the Hamiltonization, and eventually integrability, for $\epsilon\ne 1$ is still an open problem.

In Section 3, we study the rolling with additional constraints determined by the non-twist condition of the ball at the contact point (the infinitesimal rotation of the ball in the tangent plane to the contact point are forbidden),
referred as the rubber Chaplygin ball problem.
The equations of motion are described in Proposition \ref{SSRR}. Now, the distribution of constraints is
$(n-1)$-dimensional and represents the connection of the principal bundle
\begin{equation}
\label{principal}
\xymatrix@R28pt@C28pt{
SO(n) \ar@{^{}->}[r]  & SO(n) \times S^{n-1} \ar@{^{}->}[d]^{\pi}  \\
  & S^{n-1}  }
\end{equation}
with respect  to the diagonal $SO(n)$-action, i.e., the system is a $SO(n)$-Chaplygin system.

We also consider an appropriate \emph{extended system}
allowing the integrals that replace the non-twist condition
(the rubber Chaplygin ball problem is its subsystem,  Subsection \ref{extended-system}). The obtained system is an example of \emph{$\epsilon$-modified LR system} (see \cite{Jo5}),
implying the form of an invariant measure described in Theorems \ref{IMRR} and \ref{RS2}.
In particular, for the inertia operator
\begin{equation}\label{spec-op}
\mathbb I(E_i\wedge E_j)=(a_ia_j-D)E_i\wedge
E_j,
\end{equation}
the invariant measure, as in the case of non-rubber rolling and the operator \eqref{ch-op}, significantly simplifies (see Theorem \ref{RS2}, item (ii)).

Further, in Section 4, we derive the curvature of the nonholonomic distribution (see Lemma \ref{curvature}), describe the $SO(n)$-Chaplygin reduction to $S^{n-1}$ (Theorem \ref{REDsym}), as well as the reduced invariant measure (Theorem \ref{RMIM}). Finally, we obtain the Hamiltonization of the reduced system defined by the inertia operator \eqref{spec-op} (Theorem \ref{main}, Section 5).

\section{Chaplygin ball in $\R^n$}

\subsection{Kinematics}
We consider the Chaplygin ball type problem of rolling without
slipping of an $n$-dimensional balanced ball, the mass center $C$
coincides with the geometrical center, of radius $\rho$ in several
nonholonomic models:\footnote{It would be also interesting to study a modified problem, where we assume that the ball rolls over a rotating $n$--dimensional sphere (for $n=3$, see \cite{BoMaKi, FS}). Rolling of a $n$--dimensional Chaplygin ball over a rotating horizontal plane is considered in \cite{FNS}.}

\begin{itemize}
\item[(i)] rolling over outer surface of the $(n-1)$-dimensional
fixed sphere of radius $\sigma$, Figure 1a;

\item[(ii)] rolling over inner surface of the $(n-1)$-dimensional
fixed sphere of radius $\sigma$ ($\sigma>\rho$), Figure 1b;

\item[(iii)] rolling over outer surface of the $(n-1)$-dimensional
fixed sphere  of radius $\sigma$, but the fixed sphere is within
the rolling ball ($\sigma<\rho$, in this case, the rolling ball is
actually a spherical shell), Figure 1c.
\end{itemize}

We suppose that the origin $O$ of $\R^n$  coincides with the center of the fixed sphere. The configuration space is
 the {direct product} of Lie
groups $SO(n)$ and $\R^n$, where $g\in SO(n)$ is the rotation
matrix of the sphere (mapping a frame attached to the body to the
space frame) and  $\mathbf r=\overrightarrow{OC}\in {\mathbb R}^n$
is the position vector of its center $C$ (in the space frame).
The vector $\mathbf r$ belongs to the $(n-1)$-dimensional constraint
hypersurface $\mathcal S$ defined by the holonomic constraint
\begin{eqnarray*}
(\mathbf r,\mathbf r)=(\sigma\pm\rho)^2
\end{eqnarray*}
i.e, $\mathcal S$ is a sphere $S^{n-1}$.\footnote{From now on, whenever we have a sign
$\pm$, we take "$+$" for the case (i) and "$-$" in the cases
(ii) and (iii).}

As usual, for a trajectory $(g(t),\mathbf r(t))$ we define angular
velocities of the ball in the moving and the fixed frame, and
the velocity of the center $C$ of the ball in the fixed frame by
$$
\omega=g^{-1}\dot g, \qquad \Omega=\dot g g^{-1}=\Ad_g(\omega), \quad \mathbf
V=\dot {\mathbf r}=\frac{d}{dt}\overrightarrow{OC},
$$
respectively.

Let $A$ be the point of the rolling ball at the point of contact.
The condition for the ball to roll without
slipping leads that the velocity of the contact point is equal to
zero in the fixed reference frame:\footnote{Through the paper, we
consider vectors in $\R^n$ as columns and $\Omega\mathbf\Gamma$ denotes
the usual matrix multiplication.  The Euclidean scalar product of
$x,y\in\R^n$ is simply $(x,y)=x^Ty$, while the wedge product is
$x\wedge y=x\otimes y-y\otimes x=xy^T-yx^T$.}
\begin{eqnarray}
&&\frac{d}{dt}\overrightarrow{OA}=\frac{d}{dt}\big(\overrightarrow{OC}+\overrightarrow{CA}\big)=\mathbf
V-\rho{\Omega}\mathbf\Gamma =0 \qquad (\text{the case (i)});\nonumber \\
&&\frac{d}{dt}\overrightarrow{OA}=\frac{d}{dt}\big(\overrightarrow{OC}+\overrightarrow{CA}\big)=\mathbf
V+\rho{\Omega}\mathbf\Gamma =0 \qquad (\text{the cases (ii) and
(iii)}),\label{ch-constr}
\end{eqnarray}
where $\mathbf\Gamma\in{\mathbb R}^{n}$ is the unit normal to the fixed sphere at the contact point directed outward, or, equivalently, the direction of the contact point in the fixed reference frame:
\begin{equation}\label{GAM}
\mathbf\Gamma=\frac{1}{\vert\overrightarrow{OA}\vert}\overrightarrow{OA}=\frac{1}{\sigma\pm\rho}\mathbf r.
\end{equation}

Therefore, the nonholonomic
distribution is
\[
\mathcal D^{\pm}=\{(\omega,\mathbf V,g,\mathbf r) \, \vert\,
\mathbf V=\pm \rho \Ad_g(\omega)\mathbf\Gamma=\pm \frac{\rho}{
\sigma\pm\rho} \Ad_g(\omega)\mathbf r \}.
\]
It is clear that $\mathcal D^{\pm}$ is diffeomorphic to the
product $TSO(n)\times S^{n-1}$.

\begin{pspicture}(14,6.2)
\pscircle[linecolor=black](2.1,2.5){1.5}
\psellipticarc[linestyle=dashed](2.1,2.5)(1.5,0.6){0}{180}
\psellipticarc(2.1,2.5)(1.5,0.6){180}{360}
\psdot[dotsize=2pt](2.1,2.5)\uput[0](2.1,2.7){$O$}
\psdot[dotsize=2pt](1.2,3.7)\uput[0](1.1,3.5){$A$}
\psdot[dotsize=2pt](0.75,4.3)\uput[0](0.65,4.5){$C$}
\pscircle[linecolor=black](0.75,4.3){0.75}
\psarc[linewidth=0.03cm](0.75,4.3){0.66}{200}{250}
\psarc[linewidth=0.03cm](0.75,4.3){0.6}{215}{245}
\psline{->}(1.2,3.7)(0.9,4.1)
\uput[0](0.9,4.1){$\mathbf\Gamma$}
\pscircle[linecolor=black](5.8,3){2}
\psellipticarc[linestyle=dashed](5.8,3)(2,0.8){0}{105}
\psellipticarc[linestyle=dashed](5.8,3)(2,0.8){165}{180}
\psellipticarc(5.8,3)(2,0.8){180}{360}
\psdot[dotsize=2pt](5.8,3)\uput[0](5.8,3.2){$O$}
\psdot[dotsize=2pt](4.8,3.8)\uput[0](4.8,3.8){$C$}
\pscircle[linecolor=black](4.8,3.8){0.703}
\psarc[linewidth=0.03cm](4.8,3.8){0.61}{200}{250}
\psarc[linewidth=0.03cm](4.8,3.8){0.55}{215}{245}
\psline{<-}(3.82,4.63)(4.26,4.27)
\uput[0](3.5,4.9){$\mathbf\Gamma$}
\psdot[dotsize=2pt](4.26,4.27)
\uput[0](4.16,4.07){$A$}
\pscircle[linecolor=black](10,3){2}
\psdot[dotsize=2pt](10,3)\uput[0](10,3.2){$C$}
\psdot[dotsize=2pt](9,3.8)\uput[0](9,3.8){$O$}
\pscircle[linecolor=black](9,3.8){0.703}
\psellipticarc[linestyle=dashed](9,3.8)(0.703,0.3){0}{180}
\psellipticarc(9,3.8)(0.703,0.3){180}{360}
\psarc[linewidth=0.03cm](10,3){1.88}{200}{250}
\psarc[linewidth=0.03cm](10,3){1.80}{215}{245}
\psline{<-}(8.02,4.63)(8.46,4.27)
\psdot[dotsize=2pt](8.46,4.27)
\uput[0](7.6,4.9){$\mathbf\Gamma$}
\uput[0](8.15,4.65){$A$}
\uput[0](1.3,0.5){Figure 1a}
\uput[0](5,0.5){Figure 1b}
\uput[0](9.1,0.5){Figure 1c}
\end{pspicture}

\subsection{Dynamics in the fixed frame}
In what follows we identify $so(n)\cong so(n)^*$ by an invariant
scalar product
\begin{equation}
 \langle X,Y\rangle=-\frac12\mathrm{tr}(XY).
\label{KF}
\end{equation}

Let $m$ be the mass of the ball and $\mathbb I: so(n) \to so(n)^*\cong so(n)$ be the inertia tensor that defines a left--invariant metric on $SO(n)$. The Lagrangian of the system is
then given by
\begin{equation}
L(\omega,\mathbf V,g,\mathbf r)=\frac12\langle \mathbb
I\omega,\omega\rangle+\frac12 m(\mathbf V,\mathbf V),
\label{ch-lagr}
\end{equation}
where $(\cdot,\cdot)$ is the Euclidean scalar product in $\R^n$.

By the use of the
constraints \eqref{ch-constr} we find the form of reaction forces
in the right-trivialization of $SO(n)$ in which the equations
\eqref{Hamilton} become
\begin{align}
\dot M &=-(\pm\rho \mathbf\Lambda \wedge \mathbf\Gamma),  \label{c1}\\
m\dot{\mathbf V} &=\mathbf\Lambda, \label{c2} \\
\dot g &=\Omega\cdot g,\label{c3}\\
 \dot{\mathbf r} &=\mathbf V.     \label{c4}
\end{align}
where $M=\Ad_g(\mathbb I\omega)\in so(n)^*\cong so(n)$ is the ball
angular momentum in the fixed frame and  $\Lambda\in\R^n$ is the
Lagrange multiplier. Differentiating the constraints
(\ref{ch-constr}) and using (\ref{c2}),  we get
\begin{equation}\label{LAM}
\mathbf\Lambda= \pm m\rho(\dot\Omega\mathbf\Gamma + \Omega \dot{\mathbf\Gamma}).
\end{equation}

Further, \eqref{GAM} and \eqref{ch-constr} imply that the vector $\mathbf\Gamma$ in the
fixed frame satisfies the equation:
\begin{equation}\label{Gamma}
\dot{\mathbf\Gamma}=\frac{1}{\sigma\pm\rho}\mathbf
V=\pm\frac{\rho}{\sigma\pm\rho}\Omega\mathbf\Gamma.
\end{equation}

Finally, from \eqref{LAM} and \eqref{Gamma} we get that \eqref{c1} takes the form
\begin{equation}
\dot M=-D\big(\dot\Omega\, \mathbf\Gamma \otimes \mathbf\Gamma + \mathbf\Gamma \otimes
\mathbf\Gamma
\,\dot\Omega\big)-D\big(\pm\frac{\rho}{\sigma\pm\rho}\big)\big(\Omega\,\Omega\,\mathbf\Gamma\otimes
\mathbf\Gamma - \mathbf\Gamma \otimes \mathbf\Gamma \,\Omega\,\Omega\big),
\label{ch*}\end{equation}
where $D=m\rho^2$.

\subsection{Dynamics in the body frame and reduction}
Both the Lagrangian $L$ and the distribution $\mathcal D^\pm$
are invariant with respect to the \emph{left} $SO(n)$-action
\begin{equation}\label{left}
a\cdot (\omega,\mathbf V,g,\mathbf r)=(\omega,a\mathbf
V,ag,a\mathbf r), \qquad a\in SO(n).
\end{equation}

Therefore, the system can be reduced to
$$
so(n)\times S^{n-1}\cong (TSO(n)\times S^{n-1})/SO(n)\cong
\mathcal D^{\pm}/SO(n).
$$

Note that the $SO(n)$--action defines the principal bundle \eqref{principal},
where the submersion $\pi$ is given by
\begin{equation}\label{submersion}
{\gamma}=\pi(g,\mathbf
r)=\frac{1}{\sigma\pm \rho} g^{-1}\mathbf r=g^{-1}\mathbf\Gamma,
\end{equation}
that is, a base point of $(g,\mathbf r)$ is $\gamma=g^{-1}\mathbf\Gamma$, the unit normal at the contact point to the fixed sphere (directed outward)
in the frame attached to the ball.

We can use $(g,\gamma)$
 instead of $(g,\mathbf r)$, for coordinates of a configuration space. Then the
$SO(n)$--action \eqref{left} takes the form:
\begin{equation}\label{left2}
a\cdot (\omega,\dot\gamma,g,\gamma)=(\omega,\dot\gamma,ag,\gamma).
\qquad a\in SO(n).
\end{equation}

From \eqref{Gamma}, we get the kinematic equation for $\gamma$
\[
\dot\gamma =\frac{d}{dt}\big(g^{-1}\big){\mathbf\Gamma}+g^{-1}\dot{\mathbf\Gamma}=-g^{-1}\dot g g^{-1}{\mathbf\Gamma} \pm\frac{\rho}{\sigma\pm\rho}g^{-1}\Omega{\mathbf\Gamma}=-\omega\gamma\pm \frac{\rho}{\sigma\pm\rho}\omega\gamma.
\]

By introducing parameter $\epsilon$ (see \eqref{Epsilon}),
we can write it as a modified
Poisson equation
\begin{equation}\label{PE*}
\dot\gamma=-\epsilon\omega\gamma.
\end{equation}

Let
\begin{equation} \label{K}
\mathbf k=\kappa (\omega)=\mathbb
I\omega+D( {\omega \,} \gamma\otimes\gamma+ \gamma \otimes
\gamma{\,\omega} )\in so(n)\cong so(n)^*
\end{equation}
be the angular momentum of the ball relative to the contact point (see \cite{FeKo}).

\begin{prop}\label{SSR}
{\rm (i)} The complete set of equations on $T^*SO(n)\times S^{n-1}$ in
variables $(\mathbf k,g,\gamma)$ is given by
\begin{equation}
\dot{\mathbf k}=[\mathbf k,\omega], \qquad \dot g=g\cdot\omega,
\qquad \dot\gamma=-\epsilon\omega\gamma.\label{sfera-sfera}
\end{equation}

{\rm (ii)} The reduction of the left $SO(n)$--symmetry \eqref{left2} gives a
system on $so(n)^*\times S^{n-1}$ defined by the equations
\begin{equation}
\dot{\mathbf k}=[\mathbf k,\omega],  \qquad
\dot\gamma=-\epsilon\omega\gamma.\label{sfera-sfera-red}
\end{equation}
\end{prop}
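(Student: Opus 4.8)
The plan is to dispatch the two kinematic equations immediately and then focus all effort on the momentum equation. The reconstruction equation $\dot g=g\omega$ is nothing but the definition $\omega=g^{-1}\dot g$ (equivalently \eqref{c3} rewritten via $\Omega=g\omega g^{-1}$), and $\dot\gamma=-\epsilon\omega\gamma$ is precisely the modified Poisson equation \eqref{PE*} already established. So the entire content of item (i) is the identity $\dot{\mathbf k}=[\mathbf k,\omega]$.

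My key idea is to transport $\mathbf k$ to the space frame and recognize a conservation law. Applying $\Ad_g$ to \eqref{K} and using $\Omega=\Ad_g(\omega)$ together with $g(\gamma\otimes\gamma)g^{-1}=\mathbf\Gamma\otimes\mathbf\Gamma$ (which follows from $g\gamma=\mathbf\Gamma$ and $g^{-1}=g^T$), I would obtain
\begin{equation*}
K:=\Ad_g(\mathbf k)=M+D\big(\Omega\,\mathbf\Gamma\otimes\mathbf\Gamma+\mathbf\Gamma\otimes\mathbf\Gamma\,\Omega\big),
\end{equation*}
the angular momentum about the contact point written in the fixed frame. Since $\mathbf k=g^{-1}Kg$, differentiating with $\dot g=g\omega$ gives
\begin{equation*}
\dot{\mathbf k}=\Ad_{g^{-1}}(\dot K)-\omega\mathbf k+\mathbf k\omega=\Ad_{g^{-1}}(\dot K)+[\mathbf k,\omega],
\end{equation*}
so the desired equation is equivalent to $K$ being a first integral in the fixed frame.

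To prove $\dot K=0$ I would substitute the two available dynamical inputs: the equation \eqref{ch*} for $\dot M$ and the kinematic law \eqref{Gamma}. Writing $c=\pm\rho/(\sigma\pm\rho)$, the computational core is the tensor identity
\begin{equation*}
\frac{d}{dt}\big(\mathbf\Gamma\otimes\mathbf\Gamma\big)=c\,\Omega\,(\mathbf\Gamma\otimes\mathbf\Gamma)-c\,(\mathbf\Gamma\otimes\mathbf\Gamma)\,\Omega=c\,[\Omega,\mathbf\Gamma\otimes\mathbf\Gamma],
\end{equation*}
where I use $\dot{\mathbf\Gamma}=c\,\Omega\mathbf\Gamma$ and $\Omega^T=-\Omega$. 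Feeding this into the derivative of $\Omega\,\mathbf\Gamma\otimes\mathbf\Gamma+\mathbf\Gamma\otimes\mathbf\Gamma\,\Omega$, the conjugated cross terms $\pm c\,\Omega(\mathbf\Gamma\otimes\mathbf\Gamma)\Omega$ cancel and one is left with exactly $-\dot M$ as given by \eqref{ch*}; hence $\dot K=\dot M+D\frac{d}{dt}(\cdots)=0$ and $\dot{\mathbf k}=[\mathbf k,\omega]$ follows.

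For item (ii) I would use the $SO(n)$-symmetry directly. Under the left action \eqref{left2}, $g\mapsto ag$, one checks that $\omega=g^{-1}\dot g$, $\gamma=g^{-1}\mathbf\Gamma$ and therefore $\mathbf k=\kappa(\omega)$ are all invariant, so the action merely translates the group variable. Consequently the $\mathbf k$- and $\gamma$-equations of \eqref{sfera-sfera} already close among themselves (with $\omega$ recovered from $(\mathbf k,\gamma)$ by inverting the linear map $\kappa$ in \eqref{K}), and the quotient by $SO(n)$ amounts to discarding the reconstruction equation $\dot g=g\omega$, which yields \eqref{sfera-sfera-red} on $so(n)^*\times S^{n-1}$. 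I expect the main obstacle to be the cancellation in $\dot K=0$: it requires careful bookkeeping of the noncommuting products of $\Omega$ with the rank-one tensor $\mathbf\Gamma\otimes\mathbf\Gamma$ and of the antisymmetry $\Omega^T=-\Omega$, so that the terms coming from the reaction force in \eqref{ch*} are matched precisely.
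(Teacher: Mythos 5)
Your proof is correct, but it takes a genuinely different route from the paper's. The paper stays entirely in the body frame: it transports \eqref{ch*} to the left trivialization (equation \eqref{druga}), computes $\frac{d}{dt}(\omega\gamma\otimes\gamma+\gamma\otimes\gamma\omega)$ from \eqref{PE*} (equation \eqref{izvod}), and adds the two to get $\dot{\mathbf k}=[\mathbf k,\omega]$ directly. You instead pass to the fixed frame, observe that $\dot{\mathbf k}=[\mathbf k,\omega]$ is equivalent to the conservation law $\dot K=0$ for $K=\Ad_g(\mathbf k)=M+D\bigl(\Omega\,\mathbf\Gamma\otimes\mathbf\Gamma+\mathbf\Gamma\otimes\mathbf\Gamma\,\Omega\bigr)$, and verify this from \eqref{ch*} and \eqref{Gamma}; all your intermediate identities (the rank-one derivative $\frac{d}{dt}(\mathbf\Gamma\otimes\mathbf\Gamma)=c[\Omega,\mathbf\Gamma\otimes\mathbf\Gamma]$ and the cancellation of the terms $\Omega(\mathbf\Gamma\otimes\mathbf\Gamma)\Omega$) check out, and your treatment of item (ii) via invariance of $(\omega,\gamma,\mathbf k)$ under \eqref{left2} is the standard, correct argument. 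The two computations are conjugate to one another, but your organization buys a piece of physical insight the paper leaves implicit: the angular momentum with respect to the contact point is a matrix-valued first integral in the fixed frame, for every $\epsilon$, which immediately explains the integral $F_3=(\vec{\mathbf k},\vec{\mathbf k})$ in \eqref{cl:int} and, when $\epsilon=1$ (so that $g\vec\gamma$ is also fixed in space), the extra integral $F_4=(\vec{\mathbf k},\vec\gamma)$. One further simplification you missed: $\dot K=0$ follows even more cheaply from \eqref{c1}, \eqref{c2} and \eqref{Gamma} without invoking \eqref{ch*} at all, since $K=M\pm m\rho\,\mathbf V\wedge\mathbf\Gamma$ by the rolling constraint, and then $\dot K=\mp\rho\,\mathbf\Lambda\wedge\mathbf\Gamma\pm\rho\,\mathbf\Lambda\wedge\mathbf\Gamma+\bigl(\tfrac{\pm m\rho}{\sigma\pm\rho}\bigr)\mathbf V\wedge\mathbf V=0$. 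The paper's body-frame computation, on the other hand, has the advantage of producing the equations directly in the reduced variables $(\mathbf k,\gamma)$ and of fitting the modified L+R framework of \cite{Jo5} used later.
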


\begin{proof}
By applying the identities
\begin{equation*}\label{oO}
\dot\omega=\Ad_{g^{-1}}(\dot\Omega),  \qquad \mathbb I\dot\omega-[\mathbb I\omega,\omega]=
\Ad_{g^{-1}}(\frac{d}{dt}\big(\Ad_{g}(\mathbb I\omega)\big)=\Ad_{g^{-1}}(\dot M),
\end{equation*}
to \eqref{ch*}, in the left trivialization of
$SO(n)$ we obtain the equation:
\begin{align}
\nonumber \mathbb I\dot\omega-[\mathbb I\omega,\omega] =&-D\big(
{\dot \omega \,} \gamma\otimes\gamma+ \gamma \otimes
\gamma{\,\dot\omega}\big)\\
\label{druga}&-D\big(\pm\frac{\rho}{\sigma\pm\rho}\big)\big(\omega\,\omega\,\gamma\otimes
\gamma - \gamma \otimes \gamma \,\omega\,\omega\big)\\
\nonumber =& -D\big( {\dot \omega \,} \gamma\otimes\gamma+
\gamma \otimes \gamma{\,\dot\omega}\big)+D(1-\epsilon)[{\omega
\,} \gamma\otimes\gamma+ \gamma \otimes \gamma{\,\omega},\omega].
\end{align}

Next, from \eqref{PE*} we have
\begin{align}
\nonumber
\frac{d}{dt}(\omega\gamma\otimes\gamma+\gamma\otimes\gamma\omega)
=&\dot\omega\gamma\otimes\gamma+\gamma\otimes\gamma\dot\omega-\epsilon\omega\omega\gamma\otimes\gamma\\
\label{izvod}& +\epsilon
\omega\gamma\otimes\gamma\omega-\epsilon\omega\gamma\otimes\gamma\omega+\epsilon\gamma\otimes\gamma\omega\omega
\\
\nonumber =&
\dot\omega\gamma\otimes\gamma+\gamma\otimes\gamma\dot\omega+
\epsilon[\omega\gamma\otimes\gamma+\gamma\otimes\gamma\omega,\omega].
\end{align}

As a result, from \eqref{druga} and \eqref{izvod} we obtain:
\begin{align*}
\dot{\mathbf k} =& \mathbb
I\dot\omega+D(\dot\omega\gamma\otimes\gamma+\gamma\otimes\gamma\dot\omega)+
D\epsilon[\omega\gamma\otimes\gamma+\gamma\otimes\gamma\omega,\omega]\\
=&
[\mathbb I\omega,\omega]+D(1-\epsilon)[\omega\gamma\otimes\gamma+\gamma\otimes\gamma\omega,\omega]+D\epsilon[\omega\gamma\otimes\gamma+\gamma\otimes\gamma\omega,\omega]\\
=&[\mathbf k,\omega].
\end{align*}
\end{proof}

\begin{remark}\label{RAVAN}{\rm
If the radius $\sigma$ of the fixed sphere (the case (i)) tends to infinity, the parameter $\epsilon$ tends to 1, and the above equations reduce to the equations of the rolling of the Chaplygin ball over a horizontal hyperplane in $\R^n$ (see \cite{FeKo,Jo4}).
Also, note that the rolling of a Chaplygin ball over a sphere \eqref{sfera-sfera-red} is an example of a
modified L+R system on the product of $so(n)$ and the Stiefel variety $V_{n,r}$ for $r=1$, see Section 4.1 of \cite{Jo5}.
}\end{remark}

\begin{remark}
{\rm
Note that the mapping
\[
\xi \longmapsto (\xi\mathbf\Gamma)\wedge
\mathbf\Gamma=\xi\mathbf\Gamma\otimes\mathbf\Gamma+\mathbf\Gamma\otimes\mathbf\Gamma \xi
\]
is the
orthogonal projection $ \pr_\mathfrak v\colon so(n)\to\mathfrak v$ with respect to the scalar product
\eqref{KF}, while
$
\xi \longmapsto (\xi\gamma)\wedge
\gamma=\xi\gamma\otimes\gamma+\gamma\otimes\gamma \xi
$
is the orthogonal projection $ \pr_{\mathfrak v_\gamma}$ to $\mathfrak v_\gamma$, where the subspaces $\mathfrak v$ and $\mathfrak v_\gamma$ are defined by
\begin{equation}
\mathfrak v=\R^n\wedge\mathbf\Gamma \qquad \text{and} \qquad \mathfrak
v_\gamma=\Ad_{g^{-1}}(\mathfrak v)=\R^n \wedge \gamma.
\label{hg}\end{equation}

Then we have
\begin{equation}\label{epsilon}
\frac{d}{dt}\pr_{\mathfrak v_\gamma}=\epsilon[\pr_{\mathfrak
v_\gamma},\ad_\omega],
\end{equation}
where $[\cdot,\cdot]$ is the standard Lie bracket in the space of
linear operators of $so(n)$. Thus, equivalently, we can derive
\eqref{izvod} from the identity
$$
\frac{d}{dt}(\pr_{\mathfrak v_\gamma}\omega)=\pr_{\mathfrak
v_\gamma}\dot\omega+\frac{d}{dt}(\pr_{\mathfrak v_\gamma})\omega.
$$
}\end{remark}

\begin{remark}{\rm
The operator $\kappa=\mathbb I+D\pr_{\mathfrak v_\gamma}\colon
so(n)\to so(n)\cong so(n)^*$ can be also defined by the use of the \emph{constrained
Lagrangian}
\begin{equation}
\mathbf L=L\vert_{\mathbf V=\pm \rho \Ad_g(\omega)\mathbf\Gamma}= \frac12\langle \mathbb
I\omega,\omega\rangle+\frac{D}{2}(\Ad_g(\omega)\mathbf\Gamma,\Ad_g(\omega)\mathbf\Gamma)
=: \frac12\langle\kappa(\omega),\omega\rangle,
\label{red:lag}\end{equation}
which represents the kinetic energy, preserved along the flow of the system.
}\end{remark}

\subsection{Invariant measure}

Based on general observations given for $\epsilon$-modified L+R systems (see
Theorems 4 and 5, \cite{Jo5}) we have that for the rolling over a sphere,
the density of an invariant measure keeps the same form as in the case of the rolling over a horizontal hyperplane
(see Fedorov and Kozlov \cite{FeKo, FeRCD}).

Let
\begin{equation}
\mu(\gamma)=\sqrt{\det (\kappa)}=\sqrt{\det ({\mathbb I}+D\pr_{\mathfrak
v_\gamma})} \,, \label{ch:mer}
\end{equation}
and let $A=\diag(a_1,\dots,a_n)$, where $a_1,\dots,a_n$ are parameters of the inertia operator \eqref{ch-op}.
Also, by $\mathrm d\mathbf k$ and $\mathrm d\gamma$ we denote the standard volume
forms on $so(n)^*$ and $S^{n-1}$, respectively, and by $\varOmega$ the canonical symplectic structure on
$T^*SO(n)$, $d=\dim SO(n)$.

\begin{thm}\label{TC}
{\rm(i)} The problem of the rolling of a ball over a sphere
\eqref{sfera-sfera}  on $T^*SO(n)\times S^{n-1}$ in variables
$(\mathbf k,g,\gamma)$ has an invariant measure
\begin{equation}
\mu^{-1}\, \varOmega^d\wedge\mathrm d\gamma=1/\sqrt{\det(\kappa)}\,\varOmega^d\wedge
\mathrm d\gamma= 1/\sqrt{\det ({\mathbb I}+D\pr_{\mathfrak
v_\gamma})}\,\varOmega^d \wedge \mathrm d\gamma, \label{MU}
\end{equation}
while the reduced flow \eqref{sfera-sfera-red} in variables
$(\mathbf k, \gamma)$ has an invariant measure
\begin{equation}
\mu^{-1}\,\mathrm d\mathbf k \wedge \mathrm d\gamma= 1/\sqrt{\det
({\mathbb I}+D\pr_{\mathfrak v_\gamma})}\,\mathrm d\mathbf k
\wedge \mathrm d\gamma. \label{mu*}
\end{equation}

{\rm (ii)}
For the inertia operator \eqref{ch-op}, the density \eqref{ch:mer} is
proportional to
\[
(\gamma, A^{-1}\gamma)^{\frac{1}{2}({n-2})}.
\]
\end{thm}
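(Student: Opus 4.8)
The plan is to prove (i) by verifying that the proposed density is divergence--free along the flow, and to obtain (ii) from an explicit determinant computation exploiting the special structure of \eqref{ch-op}. First I would reduce the full phase--space claim to the reduced one. In the left trivialization $T^*SO(n)\cong SO(n)\times so(n)^*$, with $\mathbf k$ the momentum conjugate to the constrained Lagrangian, the Liouville volume satisfies $\varOmega^d=\mathrm{const}\cdot dg\wedge d\mathbf k$ with $dg$ the Haar measure, so that $\mu^{-1}\varOmega^d\wedge d\gamma$ agrees up to a constant with $\mu^{-1}\,dg\wedge d\mathbf k\wedge d\gamma$. Since $\omega=\kappa^{-1}\mathbf k$, the functions $\dot{\mathbf k},\dot\gamma$ and $\mu=\sqrt{\det\kappa}$ are all independent of $g$, while $\dot g=g\cdot\omega$ is, for frozen $(\mathbf k,\gamma)$, the left--invariant field whose flow $g\mapsto ge^{t\omega}$ preserves Haar measure because $SO(n)$ is unimodular. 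Hence the $g$--direction contributes nothing to the divergence, and $\mu^{-1}\varOmega^d\wedge d\gamma$ is invariant for \eqref{sfera-sfera} iff $\mu^{-1}\,d\mathbf k\wedge d\gamma$ is invariant for \eqref{sfera-sfera-red}; it therefore suffices to treat the reduced system.

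For the reduced flow, invariance of $\mu^{-1}\,d\mathbf k\wedge d\gamma$ is equivalent to $\mathrm{div}(V)=V(\log\mu)$, with the divergence taken in $d\mathbf k\wedge d\gamma$, and it splits as $\mathrm{div}_{\mathbf k}([\mathbf k,\omega])+\mathrm{div}_{S^{n-1}}(-\epsilon\omega\gamma)$. The first term vanishes: with $\omega=\kappa^{-1}\mathbf k$ for the frozen self--adjoint operator $\kappa=\kappa(\gamma)$, the quadratic field $\mathbf k\mapsto[\mathbf k,\kappa^{-1}\mathbf k]$ is divergence free since the structure constants of $so(n)$ are totally antisymmetric in the metric \eqref{KF}. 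Thus claim (i) collapses to the single identity
\[
\mathrm{div}_{S^{n-1}}(-\epsilon\omega\gamma)=V(\log\mu)=\tfrac12\tfrac{d}{dt}\log\det\kappa .
\]
I compute the right side from \eqref{epsilon}: since $\tfrac{d}{dt}\kappa=D\epsilon[\pr_{\mathfrak v_\gamma},\ad_\omega]$, cyclicity of the trace gives $V(\log\mu)=\tfrac12 D\epsilon\,\tr([\kappa^{-1},\pr_{\mathfrak v_\gamma}]\,\ad_\omega)$, which does not vanish in general because $\mathbb I$ and $\pr_{\mathfrak v_\gamma}$ need not commute.

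To evaluate the left side I would extend $-\epsilon\omega\gamma$ to $\R^n\setminus\{0\}$ by inserting $\gamma/\vert\gamma\vert$ into $\pr_{\mathfrak v_\gamma}$; the extension stays tangent to the spheres and has $r$--independent angular part, so its Euclidean divergence restricts to $\mathrm{div}_{S^{n-1}}$. The contribution in which $\omega$ is held frozen is the rotation field $\gamma\mapsto\omega\gamma$, a Killing field and hence divergence free, so only the variation of $\omega=\kappa^{-1}\mathbf k$ through $\partial_\gamma\kappa=D\,\partial_\gamma\pr_{\mathfrak v_\gamma}$ survives, leaving $-\epsilon\sum_{i,j}(\partial_{\gamma_i}\omega)_{ij}\gamma_j$. \textbf{The main obstacle is to show that this surviving Cartesian sum equals the intrinsic trace $\tfrac12 D\epsilon\,\tr([\kappa^{-1},\pr_{\mathfrak v_\gamma}]\ad_\omega)$.} Concretely this requires differentiating $\pr_{\mathfrak v_\gamma}(\xi)=\xi\Pi+\Pi\xi$, $\Pi=\gamma\gamma^T$, contracting with $\gamma$, and reorganizing the result; the careful bookkeeping of the normalization of $\gamma$ and of the anticommutator structure of $\pr_{\mathfrak v_\gamma}$ is the delicate part. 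This is exactly the content, specialized to the Stiefel variety $V_{n,1}$, of the general invariant--measure theorem for $\epsilon$--modified L+R systems.

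For (ii) the density is $\mu=\sqrt{\det(\mathbb I+D\pr_{\mathfrak v_\gamma})}$, so that $\det\kappa=\det\mathbb I\cdot\det(\mathrm{Id}+D\mathbb I^{-1}\pr_{\mathfrak v_\gamma})$. Since $\pr_{\mathfrak v_\gamma}$ is the orthogonal projection onto the $(n-1)$--dimensional space $\mathfrak v_\gamma=\R^n\wedge\gamma$, I would use Sylvester's identity to pass to an $(n-1)\times(n-1)$ determinant over $\gamma^\perp$: the map $v\mapsto v\wedge\gamma$ is an isometry of $\gamma^\perp$ onto $\mathfrak v_\gamma$, and the identity $\mathbb I^{-1}(X)=A^{-1}XA^{-1}-\tfrac1D X$, valid for \eqref{ch-op}, yields, for unit $u,w\perp\gamma$,
\[
\langle\mathbb I^{-1}(u\wedge\gamma),w\wedge\gamma\rangle=(\gamma,A^{-1}\gamma)(A^{-1}u,w)-(u,A^{-1}\gamma)(A^{-1}\gamma,w)-\tfrac1D(u,w).
\]
Writing $c=(\gamma,A^{-1}\gamma)$, $Q$ for the compression of $A^{-1}$ to $\gamma^\perp$ and $p$ for the $\gamma^\perp$--part of $A^{-1}\gamma$, this matrix is $M=cQ-p\otimes p-\tfrac1D\mathrm{Id}$, whence the crucial cancellation $\mathrm{Id}+DM=D(cQ-p\otimes p)$. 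The block decomposition of $A^{-1}$ in $\R^n=\gamma^\perp\oplus\R\gamma$ then gives the Schur--complement identity $\det(cQ-p\otimes p)=c^{\,n-2}\det(A^{-1})$, so that $\det\kappa=\big(\det\mathbb I\, D^{\,n-1}\det A^{-1}\big)(\gamma,A^{-1}\gamma)^{n-2}$ and $\mu\propto(\gamma,A^{-1}\gamma)^{(n-2)/2}$, as claimed.
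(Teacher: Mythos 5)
Your proposal is correct, and it is worth noting that the paper itself offers no in-text proof of Theorem \ref{TC}: item (i) is justified there solely by citing the general invariant-measure theorems for $\epsilon$-modified L+R systems (Theorems 4 and 5 of \cite{Jo5}, with the identification of the rolling problem as such a system noted in Remark \ref{RAVAN}), and item (ii) is stated without computation (it goes back to \cite{Jo4}). For (i) you ultimately rest on the same external result, but you add genuine value by making the surrounding reductions explicit and verifiable: the factorization $\varOmega^d\wedge\mathrm d\gamma \sim \mathrm dg\wedge\mathrm d\mathbf k\wedge\mathrm d\gamma$ with the $g$-direction killed by unimodularity of $SO(n)$; the vanishing of $\mathrm{div}_{\mathbf k}[\mathbf k,\kappa^{-1}\mathbf k]$ (which, to be pedantic, uses both the antisymmetry of the structure constants and the self-adjointness of $\kappa$, so that $\tr(\ad_{\mathbf k}\circ\kappa^{-1})=0$ as a skew--symmetric pairing — you do flag $\kappa$ as self-adjoint); and the correct identification, via \eqref{epsilon} and cyclicity, of the right-hand side $V(\log\mu)=\tfrac12 D\epsilon\,\tr([\kappa^{-1},\pr_{\mathfrak v_\gamma}]\ad_\omega)$. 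You honestly label the matching of this trace with the spherical divergence of $-\epsilon\omega\gamma$ as the unproven crux and defer it to \cite{Jo5}; since that is precisely the step the paper also outsources, this is not a gap relative to the paper's own argument. For (ii), by contrast, your proof is complete, self-contained, and checks out in every detail: the inversion formula $\mathbb I^{-1}X=A^{-1}XA^{-1}-\tfrac1D X$ for \eqref{ch-op}, the isometry $v\mapsto v\wedge\gamma$ of $\gamma^\perp$ onto $\mathfrak v_\gamma$ under \eqref{KF}, the Sylvester reduction to the compression $M=cQ-p\otimes p-\tfrac1D\mathrm{Id}$, the cancellation $\mathrm{Id}+DM=D(cQ-p\otimes p)$, and the Schur-complement identity $\det(cQ-p\otimes p)=c^{\,n-2}\det(A^{-1})$ together give $\det\kappa=\mathrm{const}\cdot(\gamma,A^{-1}\gamma)^{n-2}$, hence $\mu\propto(\gamma,A^{-1}\gamma)^{(n-2)/2}$ as in \eqref{ch:mer}; this supplies an elementary argument the paper omits entirely, and it is arguably the cleanest route to item (ii).
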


\begin{remark}\label{PRIM}{\rm Since $\mathrm d\mathbf k=\det (\kappa)\mathrm d\omega$, the invariant measure of the reduced system considered
in variables $(\omega,\gamma)$ is $\mu(\gamma)\mathrm d\omega\wedge\mathrm d\gamma$.
}\end{remark}

\subsection{3--dimensional case}
In the case $n=3$, under the isomorphism between ${\mathbb R}^3$
and $so(3)$
\begin{equation}
\vec X=(X_1,X_2,X_3)\longmapsto X=\left(\begin{matrix}
0 & -X_3 & X_2 \\
X_3 & 0 & -X_1 \\
-X_2 & X_1 & 0
\end{matrix}\right), \label{iso}
\end{equation}
from \eqref{sfera-sfera-red},  we obtain the classical equations
of rolling without
slipping of the Chaplygin ball over a sphere
\begin{equation}
\frac{d}{dt}{\vec{\mathbf k}}=\vec{\mathbf k}\times\vec \omega,
\qquad \frac{d}{dt}{\vec \gamma}=\epsilon \vec
\gamma\times\vec\omega, \label{Chap}\end{equation} where
$\vec{\mathbf k}=\mathbb I \vec\omega+ D\vec \omega-D (\vec
\omega,\vec\gamma)\vec \gamma$ and $\mathbb I=\diag(I_1,I_2,I_3)$
is the inertia operator of the ball. In the space $\R^6(\vec{\omega},
\vec\gamma)$ the density \eqref{ch:mer}
 of an invariant measure is equal to
\begin{equation}
\mu(\vec\gamma)=\sqrt{ \det(\mathbb I+D\mathbb E)\big(1-D(\vec\gamma,(\mathbb
I+D\mathbb E)^{-1}\vec\gamma )\big)}, \label{mu-ch}
\end{equation} the expression given by Chaplygin for $\epsilon=1$ \cite{Ch1} (see Remark \ref{RAVAN}), and by Yaroshchuk for $\epsilon \ne 1$ \cite{Ya}. Here $\mathbb
E=\diag(1,1,1)$.

The system \eqref{Chap} always has three integrals
\begin{equation}
F_1=(\vec\gamma,\vec\gamma)=1, \quad F_2=\frac12(\vec{\mathbf
k},\vec \omega), \quad F_3=(\vec{\mathbf k},\vec{\mathbf k}).
 \label{cl:int}
\end{equation}

For $\epsilon=1$, there is the fourth integral $F_4=(\vec{\mathbf
k},\vec \gamma)$ and the problem is integrable by the Euler-Jacobi
theorem: the phase space is almost everywhere foliated by
two-dimensional invariant tori with quasi-periodic, non-uniform motion
\eqref{Jacobi2} (see Chaplygin \cite{Ch1}). Moreover,
Borisov and Mamaev proved that the system \eqref{Chap} is
Hamiltonizable with respect to certain nonlinear Poisson bracket
on $\R^6$ (\cite{BM}, see also \cite{BBM3, Ts}).

Remarkably, for $\epsilon=-1$ (the case (iii) with $\rho=2\sigma$)
Borisov and Fedorov (see \cite{BF}) found the
integrable case with the fourth integral
\[
\tilde F_4=(I_2+I_3-I_1+D)\mathbf
k_1\gamma_1+(I_3+I_1-I_2+D)\mathbf
k_2\gamma_2+(I_1+I_2-I_3+D)\mathbf k_3\gamma_3.
\]
The system is
integrated on an invariant hypersurface $\tilde F_4=0$ \cite{BFM}. Furthermore, its topological analysis
and a representation as a sum of two conformally Hamiltonian vector fields are given in \cite{BM3} and \cite{Ts}, respectively.
We feel that it would be very interesting to have similar results in a dimension greater then 3.

\section{Rolling of the Chaplygin ball without slipping and twisting}

\subsection{Rubber rolling}
Three--dimensional rubber Chaplygin ball problems are introduced
in \cite{EKR} and \cite{EK}, while the multidimensional rubber rolling over a horizontal
hyperplane is considered in \cite{Jo3}. For a given normal vector $\gamma=g^{-1}\mathbf\Gamma$, let
\[
\mathbf E_1,\dots,\mathbf E_{n-1},\mathbf\Gamma, \qquad \text{and}\qquad
\mathbf e_1=g^{-1} \mathbf E_1,\dots,\mathbf e_{n-1}=g^{-1} \mathbf E_{n-1},\gamma=g^{-1}\mathbf\Gamma
\]
be orhonormal bases of $\mathbb R^n$ in the fixed frame
and in the body frame, respectively.
Rubber Chaplygin ball is defined as a system \eqref{ch-constr}, \eqref{ch-lagr}
subjected
 to the additional constraints
 \begin{equation}\label{r-veze}
\phi_{ij}=\langle \Omega, \mathbf E_i\wedge \mathbf E_j\rangle=\langle \omega,\mathbf e_i\wedge \mathbf e_j\rangle=0, \qquad 1\le
i<j\le n-1
\end{equation}
describing the no-twist condition: the angular velocity matrix $\omega$ has rank 2 and the
corresponding admissible plane of rotation contains the normal vector $\gamma$ to the rolling sphere at the contact
point.

Alternatively, note that
\[
\mathbf E_i\wedge \mathbf E_j, \qquad \mathbf e_i\wedge \mathbf e_j=\Ad_g^{-1}(\mathbf E_i\wedge \mathbf E_j),
\qquad 1\le i<j\le n-1
\]
are the orthonormal bases of  $\mathfrak h$ and $\mathfrak
h_\gamma=\Ad_{g^{-1}}\mathfrak h$, orthogonal complements to $\mathfrak v$ and $\mathfrak v_\gamma$ (see \eqref{hg})
with respect to the scalar product \eqref{KF}.
Thus, the constraints \eqref{r-veze} can be rewritten as
\begin{equation}\label{rc}
\pr_\mathfrak h\Omega=0, \quad \text{i.e.,}\quad \pr_{\mathfrak h_\gamma}\omega=0 \quad \Longleftrightarrow \quad \Omega\in\mathfrak v, \quad \text{i.e.,} \quad \omega\in\mathfrak v_\gamma.
\end{equation}

As a
result, we obtain $(n-1)$-dimensional constraint distribution
\begin{equation}\label{DIS}
\mathcal F^\pm=\{(\omega,\mathbf V,g,\mathbf r) \, \vert\,
\mathbf V=\pm \frac{\rho}{\sigma\pm\rho} \Ad_g(\omega)\mathbf r, \, \pr_{\mathfrak
h_\gamma}\omega=0\}\subset\mathcal D^\pm.
\end{equation}

Let $\mathbb E$ be the identity operator on $so(n)$.
We have the relation
\begin{equation}\label{kI}
\mathbf k=\mathbb I\omega+D\omega=\mathbf I\omega,  \quad \text{for} \quad \omega\in\mathfrak v_\gamma=\R^n\wedge\gamma,
\end{equation}
where $\mathbf k$ is given by \eqref{K} and $\mathbf I=\mathbb I+\mathbb E$.
Let $m=\mathbf I\omega\in so(n)\cong so(n)^*$ be the angular momentum with
respect to the modified inertia operator $\mathbf I$.
After the identification $\mathcal D^{\pm}\cong TSO(n)\times S^{n-1}$,  we obtain a natural phase space of the problem:
\[
\mathcal G=\{(m,g,\gamma)\in T^*SO(n)\times S^{n-1}\,\vert\, \pr_{\mathfrak h_\gamma}\mathbf I^{-1}m=\pr_{\mathfrak h_\gamma}\omega=0\}.
\]

Using Proposition \ref{SSR} and \eqref{kI}, we can write the equations of
a motion in the variables $(m,g,\gamma)$
\begin{equation}
\dot m=[m,\omega]+\lambda_0, \qquad \dot g=g\cdot\omega, \qquad
\dot\gamma=-\epsilon\omega\gamma.\label{sfera-sfera-r2}
\end{equation}

The Lagrange multiplier $\lambda_0\in\mathfrak h_\gamma$
 is determined from the condition that the angular velocity $\omega$
satisfies \eqref{rc}.
From \eqref{epsilon} and the identity $\pr_{\mathfrak
h_\gamma}+\pr_{\mathfrak v_\gamma}=\mathbb E$, we have
$
\frac{d}{dt}\pr_{\mathfrak h_\gamma}=\epsilon[\pr_{\mathfrak
h_\gamma},\ad_\omega].
$
Thus,
\begin{align*}
0=& \frac{d}{dt}\big(\pr_{\mathfrak h_\gamma} \omega\big)\\
 =&\epsilon(\pr_{\mathfrak h_\gamma}\ad_\omega-\ad_\omega\pr_{\mathfrak
h_\gamma})\omega+\pr_{\mathfrak h_\gamma}\dot\omega\\
=& \pr_{\mathfrak h_\gamma}\frac{d}{dt}\big(\mathbf I^{-1}[m,\omega]+\mathbf I^{-1}\lambda_0 \big),
\end{align*}
and the multiplier $\lambda_0\in\mathfrak h_\gamma$ is the solution of the equation
\begin{equation}\label{EQL}
\mathbf I^{-1}([m,\omega]+\lambda_0)-\gamma\otimes\gamma
\mathbf I^{-1}([m,\omega]+\lambda_0)-\mathbf
I^{-1}([m,\omega]+\lambda_0)\gamma\otimes\gamma=0.
\end{equation}

Thus, we obtain.

\begin{prop}\label{SSRR}
The equations of a motion of the rubber Chaplygin ball on $\mathcal G$ are given by
\eqref{sfera-sfera-r2}, where $m=\mathbf I\omega=\mathbb I\omega+D\omega$, and $\lambda_0\in\mathfrak h_\gamma$ is the solution of \eqref{EQL}.
The reduction of the left $SO(n)$--symmetry \eqref{left2} induces
a system
on  the space $\mathcal G_0=\mathcal G/SO(n)=\{(m,\gamma)\in so(n)^*\times S^{n-1}\,\vert\, \pr_{\mathfrak h_\gamma}\omega=0\}$
given by \eqref{PE*} and
\begin{equation}
\dot m=[m,\omega]+\lambda_0. \label{sfera-sfera-r3}
\end{equation}
\end{prop}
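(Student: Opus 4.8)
The plan is to treat the rubber ball as the rolling system of Proposition~\ref{SSR} constrained by the additional no-twist relations \eqref{rc}, and to compute the supplementary reaction force through the Lagrange--d'Alembert principle. The differentials of the constraint functions $\phi_{ij}=\langle\omega,\mathbf e_i\wedge\mathbf e_j\rangle$, $1\le i<j\le n-1$, span $\mathfrak h_\gamma$, so the reaction they produce is a vector $\lambda_0\in\mathfrak h_\gamma$ added to the momentum equation $\dot{\mathbf k}=[\mathbf k,\omega]$ of Proposition~\ref{SSR}. On the distribution $\mathcal F^\pm$ one has $\omega\in\mathfrak v_\gamma$, hence $\mathbf k=\mathbf I\omega=m$ by \eqref{kI}, and $\dot{\mathbf k}=\dot m$ along any motion confined to $\mathcal G$; substituting yields $\dot m=[m,\omega]+\lambda_0$, while $\dot g=g\cdot\omega$ and $\dot\gamma=-\epsilon\omega\gamma$ are carried over unchanged. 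This gives \eqref{sfera-sfera-r2}.

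To pin down $\lambda_0$ I would require the constraint $\pr_{\mathfrak h_\gamma}\omega=0$ to be preserved in time. From \eqref{epsilon} and $\pr_{\mathfrak h_\gamma}+\pr_{\mathfrak v_\gamma}=\mathbb E$ one obtains $\frac{d}{dt}\pr_{\mathfrak h_\gamma}=\epsilon[\pr_{\mathfrak h_\gamma},\ad_\omega]$; since $\pr_{\mathfrak h_\gamma}\omega=0$ and $\ad_\omega\omega=0$, the term $\epsilon[\pr_{\mathfrak h_\gamma},\ad_\omega]\omega$ vanishes, leaving $0=\pr_{\mathfrak h_\gamma}\dot\omega$. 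With $\omega=\mathbf I^{-1}m$ and $\dot m=[m,\omega]+\lambda_0$ this reads $\pr_{\mathfrak h_\gamma}\mathbf I^{-1}([m,\omega]+\lambda_0)=0$, which upon writing $\pr_{\mathfrak h_\gamma}=\mathbb E-\pr_{\mathfrak v_\gamma}$ is precisely \eqref{EQL}.

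The step that really requires justification is the unique solvability of \eqref{EQL} for $\lambda_0\in\mathfrak h_\gamma$, equivalently the invertibility on $\mathfrak h_\gamma$ of the operator $\lambda_0\mapsto\pr_{\mathfrak h_\gamma}\mathbf I^{-1}\lambda_0$. I would argue this by positivity: for $\mu\in\mathfrak h_\gamma$, self-adjointness of $\pr_{\mathfrak h_\gamma}$ with respect to \eqref{KF} gives $\langle\pr_{\mathfrak h_\gamma}\mathbf I^{-1}\mu,\mu\rangle=\langle\mathbf I^{-1}\mu,\mu\rangle$, which is strictly positive for $\mu\ne0$ because $\mathbf I=\mathbb I+D\mathbb E$, and hence $\mathbf I^{-1}$, is positive definite. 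Thus the operator is injective, and therefore invertible, on the finite-dimensional space $\mathfrak h_\gamma$; so $\lambda_0=\lambda_0(m,\gamma)$ exists, is unique, and renders the right-hand side of \eqref{sfera-sfera-r2} a well-defined vector field tangent to $\mathcal G$.

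For the reduction, I observe that the action \eqref{left2} only left-translates $g$ while fixing $\omega$ and $\gamma$, so in the variables $(m,g,\gamma)$ it acts by $(m,g,\gamma)\mapsto(m,ag,\gamma)$, the invariants being $(m,\gamma)$. Since $\omega=\mathbf I^{-1}m$ and the multiplier $\lambda_0$ just constructed depend only on $(m,\gamma)$, the equations $\dot m=[m,\omega]+\lambda_0$ and $\dot\gamma=-\epsilon\omega\gamma$ contain no $g$-dependence and descend to $\mathcal G_0=\mathcal G/SO(n)$, producing \eqref{PE*} and \eqref{sfera-sfera-r3}, whereas $\dot g=g\cdot\omega$ is the reconstruction equation along the fiber. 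The anticipated main obstacle is thus the solvability argument for $\lambda_0$; everything else is bookkeeping resting on Proposition~\ref{SSR} and the constraint-preservation computation.
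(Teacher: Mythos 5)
Your proposal is correct and follows essentially the same route as the paper: write the equations in $(m,g,\gamma)$ via Proposition \ref{SSR} and \eqref{kI} with an added reaction $\lambda_0\in\mathfrak h_\gamma$, determine $\lambda_0$ by differentiating the constraint $\pr_{\mathfrak h_\gamma}\omega=0$ using the identity $\frac{d}{dt}\pr_{\mathfrak h_\gamma}=\epsilon[\pr_{\mathfrak h_\gamma},\ad_\omega]$ to arrive at \eqref{EQL}, and then reduce by observing that the $\dot m$ and $\dot\gamma$ equations are $g$-independent. Your positive-definiteness argument for the unique solvability of \eqref{EQL} on $\mathfrak h_\gamma$ is a useful addition that the paper leaves implicit.
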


The proof of the next theorem follows from considerations given in Subsection \ref{extended-system} below.

\begin{thm}\label{IMRR}
The problem of the rubber rolling of a ball
over a sphere \eqref{sfera-sfera-r2} and the reduced system \eqref{PE*}, \eqref{sfera-sfera-r3} possess
invariant measures
\[
\mu_\epsilon(\gamma)\,\varOmega^d\wedge \mathrm d\gamma\vert_{\mathcal G},\qquad
 \mu_\epsilon(\gamma)\,\mathrm dm\wedge\mathrm d\gamma\vert_{\mathcal G_0},
\]
respectively, where the density $\mu_\epsilon(\gamma)$ is given by
\begin{equation}\label{gustina}
\mu_\epsilon(\gamma)=(\det \mathbf I^{-1}\vert_{\mathfrak
h_\gamma})^\frac{1}{2\epsilon}\, .
\end{equation}
\end{thm}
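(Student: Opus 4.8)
The plan is to obtain the measure from the \emph{extended system} of Subsection~\ref{extended-system}, of which the rubber ball \eqref{sfera-sfera-r2}, \eqref{sfera-sfera-r3} is an invariant subsystem, and then to pin down the resulting density by a direct divergence computation. The extended system is an $\epsilon$-modified LR system on $so(n)^*\times S^{n-1}$ (respectively on $T^*SO(n)\times S^{n-1}$) in which the no-twist constraints \eqref{rc} are \emph{not} imposed; instead the quantities $\pr_{\mathfrak h_\gamma}\omega$ are arranged to be first integrals, so that the level set $\{\pr_{\mathfrak h_\gamma}\omega=0\}=\mathcal G_0$ is an invariant submanifold. On this submanifold the reaction multiplier $\lambda_0\in\mathfrak h_\gamma$ solving \eqref{EQL} is exactly the force keeping the flow tangent to the level set, so the restriction of the extended flow to $\mathcal G_0$ is the rubber system. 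The invariant measure of the rubber problem is then the measure induced by an invariant measure of the extended system on this invariant submanifold; since $\mathfrak h_\gamma\cong so(n-1)$ is the isotropy subalgebra of $\gamma$, the transverse determinant that appears is $\det(\mathbf I^{-1}\vert_{\mathfrak h_\gamma})$.

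For the existence and general form of the invariant measure of the extended $\epsilon$-modified LR system I would appeal to the general results of \cite{Jo5}, exactly as in the non-rubber case (Theorem~\ref{TC} and \eqref{ch:mer}); the only genuinely new point is the exponent $1/2\epsilon$, which I would fix by the following computation. Writing $X$ for the reduced vector field on $\mathcal G_0$ and $\nu=\mu_\epsilon(\gamma)\,\mathrm dm\wedge\mathrm d\gamma$, the invariance $\mathcal L_X\nu=0$ is equivalent to $X(\ln\mu_\epsilon)=-\operatorname{div}X$. The divergence splits into the contribution of the Euler part $\dot m=[m,\omega]+\lambda_0$ and that of the modified Poisson part $\dot\gamma=-\epsilon\omega\gamma$ on $S^{n-1}$. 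The first is controlled by the unimodularity of $so(n)$, so that $\tr\ad_\omega=0$, together with the fact that $\lambda_0\in\mathfrak h_\gamma$ is tangent to the constraint and produces no net compression; the second, combined with the motion of the constraint plane, is where $\epsilon$ enters.

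The decisive identity is the evolution of the projector, $\tfrac{d}{dt}\pr_{\mathfrak h_\gamma}=\epsilon[\pr_{\mathfrak h_\gamma},\ad_\omega]$, which follows from \eqref{epsilon} and $\pr_{\mathfrak h_\gamma}+\pr_{\mathfrak v_\gamma}=\mathbb E$. Differentiating along the flow, $\tfrac{d}{dt}\ln\det(\mathbf I^{-1}\vert_{\mathfrak h_\gamma})=\tr\big((\mathbf I^{-1}\vert_{\mathfrak h_\gamma})^{-1}\tfrac{d}{dt}(\mathbf I^{-1}\vert_{\mathfrak h_\gamma})\big)$, and inserting the projector evolution, the bracket $[\pr_{\mathfrak h_\gamma},\ad_\omega]$ reduces the trace to a commutator term that, by $\tr\ad_\omega=0$, one expects to match $-2\epsilon\operatorname{div}X$. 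Hence $X(\ln\mu_\epsilon)=\tfrac{1}{2\epsilon}X\big(\ln\det(\mathbf I^{-1}\vert_{\mathfrak h_\gamma})\big)$, which integrates to $\mu_\epsilon=(\det\mathbf I^{-1}\vert_{\mathfrak h_\gamma})^{1/2\epsilon}$ up to a constant, giving \eqref{gustina}. The factor $\tfrac12$ is the usual square root coming from the quadratic (momentum) nature of the problem, while the factor $1/\epsilon$ is forced by the $\epsilon$ rescaling of the Poisson equation and the matching $\epsilon$ in the projector evolution. Passing between $\mathcal G$ and $\mathcal G_0$ is then routine: the left $SO(n)$-symmetry \eqref{left2} identifies the two, $\varOmega^d$ descends to $\mathrm dm$ up to the bi-invariant Haar factor, and the density $\mu_\epsilon(\gamma)$, depending only on $\gamma$, is unchanged.

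The main obstacle I expect is the bookkeeping of the $\gamma$-dependence of $\mathfrak h_\gamma$. Because the constraint subalgebra rotates with the state, neither $\operatorname{div}X$ nor $\tfrac{d}{dt}\det(\mathbf I^{-1}\vert_{\mathfrak h_\gamma})$ can be read off naively: one must choose a moving orthonormal frame of $\mathfrak h_\gamma$, for instance $\mathbf e_i\wedge\mathbf e_j$ with $1\le i<j\le n-1$, express the restricted operator and its inverse in that frame, and account for the extra terms generated when the frame itself evolves. Verifying that the implicit multiplier $\lambda_0$ from \eqref{EQL} contributes no divergence, and that the cross terms $\dot{(\pr_{\mathfrak h_\gamma})}\,\mathbf I^{-1}\pr_{\mathfrak h_\gamma}+\pr_{\mathfrak h_\gamma}\mathbf I^{-1}\dot{(\pr_{\mathfrak h_\gamma})}$ collapse into the single trace above, is the technical crux; this is precisely the content of the extended-system analysis of Subsection~\ref{extended-system} and of the general computation in \cite{Jo5}.
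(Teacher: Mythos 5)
Your overall route coincides with the paper's: embed the rubber system as an invariant subsystem of the extended system of Subsection~\ref{extended-system}, recognize that extended system as an $\epsilon$-modified LR system, take its invariant measure from \cite{Jo5}, and then restrict to $\mathcal G_0$ and lift to $\mathcal G$. Indeed, that chain \emph{is} the paper's entire proof: the closed system \eqref{sfera-sfera-r3}, \eqref{E1}, \eqref{L0}, \eqref{lambde} on $\mathcal N$ is an $\epsilon$-modified LR system, Theorem 1 of \cite{Jo5} gives the density \eqref{extended}, and since the bivectors $\mathbf e_i\wedge\mathbf e_j$, $1\le i<j\le n-1$, form an orthonormal basis of $\mathfrak h_\gamma$, one has $\det \mathbf A_{ij,kl}=\det(\mathbf I^{-1}\vert_{\mathfrak h_\gamma})$, i.e.\ \eqref{extended} equals \eqref{gustina}. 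One correction of perspective: the exponent $\tfrac{1}{2\epsilon}$ is not ``a genuinely new point'' left to be fixed by hand; it is exactly what Theorem 1 of \cite{Jo5} asserts for $\epsilon$-modified LR systems, and the paper simply quotes it.

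The genuine gap is in the divergence computation you propose in order to derive that exponent: as described, it would fail, because it attributes the compressibility to the wrong terms. You claim the Euler part is harmless by unimodularity and that $\lambda_0$ ``produces no net compression,'' locating the source of the density in the $\gamma$-equation and the motion of $\pr_{\mathfrak h_\gamma}$. For LR-type systems it is exactly the opposite. Already in the model case \eqref{e3a}, \eqref{e3b} (the Veselova system when $\epsilon=1$) one checks that $\operatorname{div}_{\vec m}(\vec m\times\vec\omega)=0$ and $\operatorname{div}_{\vec\gamma}(\epsilon\,\vec\gamma\times\vec\omega)=0$, while the entire divergence equals $(\partial\lambda/\partial\vec m,\vec\gamma)\ne 0$, i.e.\ it comes from the momentum dependence of the reaction force; the density \eqref{e3c} compensates precisely this. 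The mechanism producing $\tfrac{1}{2\epsilon}$ is then: the divergence of the reaction term \eqref{L0}, \eqref{lambde} contains no $\epsilon$ at all, whereas $\tfrac{d}{dt}\ln\det(\mathbf I^{-1}\vert_{\mathfrak h_\gamma})$ is proportional to $\epsilon$ because $\gamma$ and the frame evolve by \eqref{Ei}; balancing the two forces the reciprocal exponent. So a self-contained proof must compute the divergence of the reaction term --- which is the computation carried out in \cite{Jo5} --- rather than argue it away. Two smaller defects: an extended system with $\lambda_0$ taken from \eqref{EQL} is not consistently defined off $\mathcal G_0$ (there \eqref{EQL} no longer conserves $\pr_{\mathfrak h_\gamma}\omega$; one needs the frame variables of $\mathcal M$, or the momentum \eqref{MOMENT} with equation \eqref{MOMENT*}), and restricting an invariant measure to the positive-codimension invariant set $\{\phi_{ij}=0\}$ is not automatic: it requires a factorization by the integrals \eqref{geo-int}, and such transverse factors can alter the density, as Theorem~\ref{RS2} shows (the same extended flow written in the variables $(\mathbf m,\gamma)$ has the different density \eqref{tilde mu}).
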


\begin{remark}{\rm
Since $\mathrm dm=\det(\mathbf I)\mathrm d\omega=const\cdot\mathrm d\omega$, contrary to
remark \ref{PRIM}, here the reduced system considered in variables $(\omega,\gamma)$ has the invariant measure with the same density as in the variables $(m,\gamma)$:
$\mu_\epsilon\,\mathrm d\omega\wedge\mathrm d\gamma$.
}\end{remark}

\subsection{3--dimensional case}
For $n=3$, under the isomorphism \eqref{iso} between
$\R^3$ and $so(3)$  and the identification of $\vec\gamma$ with $\vec{\mathbf e}_1 \wedge \vec{\mathbf e}_2$ in \eqref{r-veze},
we have
\begin{equation}\label{e3}
\mathcal G_0=\{(\vec m,\vec\gamma)\in \R^3\times S^2\,\vert\, \phi=(\vec\gamma,\vec\omega)=0\}
\end{equation}
and the reduced system \eqref{sfera-sfera-r3}, \eqref{PE*}
reads
\begin{equation}\label{e3a}
\dot{\vec m}=\vec m\times \vec \omega+\lambda\vec\gamma, \qquad
\dot{\vec\gamma}=\epsilon \vec\gamma\times\vec\omega,
\end{equation}
where
\begin{equation}\label{e3b}
\vec m =(\mathbb I+D\mathbb E)\vec\omega=\mathbf I\vec\omega,
\qquad \lambda=-(\vec m,\mathbf I^{-1}(\vec m\times \vec \omega))/(\vec\gamma,\mathbf
I^{-1}\vec\gamma).
\end{equation}

The density \eqref{gustina} reduces to the well known expression
\begin{equation}\label{e3c}
\mu_\epsilon(\vec\gamma)=(\mathbf
I^{-1}\vec\gamma,\vec\gamma)^\frac{1}{2\epsilon}
\end{equation}
(see \cite{EKR} for $\epsilon=1$ and \cite{EK} for $\epsilon \ne 1$).
Apart of the integrability of the rolling over a
horizontal plane ($\epsilon=1$) \cite{EKR}, as in the case of
non-rubber rolling, Borisov and Mamaev  proved the
integrability for $\epsilon=-1$ \cite{BM2}. Note that for $\epsilon=1$, the above equations coincide with the equations of nonholonomic rigid body motion studied by Veselov and Veselova \cite{VeVe1, VeVe2}.

The problem is Haminltonizable for all $\epsilon$
\cite{EKR, EK}. On the other hand, the rubber rolling of the ball where the mass
center does not coincide with the geometrical center over a
horizontal plane provides an example of the system having the
following interesting property (see \cite{BBM2, BMB}). The appropriate
phase space is foliated on invariant tori, such that the foliation
is isomorphic to the foliation of integrable Euler case of the
rigid body motion about a fixed point, but the system itself has
not analytic invariant measure and is not Hamiltonizable.

\subsection{Extended system and a dual expression for an invariant measure}\label{extended-system}
Note that we can consider equations \eqref{e3a}, \eqref{e3b} on the
product $\R^3\times S^2$ as well. The system also has an invariant measure with density \eqref{e3c} and
the reduced system on \eqref{e3} is its subsystem ($\phi=(\vec \omega,\vec\gamma)$ is the first integral).
Similarly, the system \eqref{sfera-sfera-r3}, \eqref{PE*} can be extended and the invariant measure given in Theorem \ref{IMRR} is the restriction to  $\mathcal G_0$ of an invariant measure of the extended system.
In order to define the extended system such that we can
use the results of \cite{Jo5}, we need
to add some additional variables.

Firstly, consider the system \eqref{sfera-sfera-r3}, \eqref{PE*} on $\mathcal G_0$.
We can choose vectors $\mathbf e_i(t)$, $i=1,\dots,n-1$ along a trajectory
$(m(t),\gamma(t))$, such that $\mathbf e_1(t),\dots,\mathbf e_{n-1}(t),\mathbf e_n(t)=\gamma(t)$ is a orthonormal base of $\R^n$
and that
\begin{equation}\label{Ei}
\dot{\mathbf e}_i=-\epsilon \omega \mathbf e_i, \qquad
i=1,\dots,n.
\end{equation}
Indeed, we can take a base $\mathbf e_1(t_0),\dots,\mathbf e_{n}(t_0)$
at some initial time $t_0$ (it is defined modulo the orthogonal transformations of the hyperplane $\gamma(t_0)^\perp$). From
the modified Poisson equations \eqref{Ei} it follows that
the scalar products $(\mathbf e_i(t), \mathbf e_j(t))$  are conserved.

Further, the equations \eqref{Ei} imply
\begin{equation}\label{E1}
(\mathbf e_i\wedge \mathbf e_j)^{\mathbf\cdot}=\epsilon [\mathbf e_i\wedge \mathbf e_j,\omega], \qquad
1\le i<j\le n.
\end{equation}
We can determine the reaction force $\lambda_0$ starting
from the expression
\begin{equation}\label{L0}
\lambda_0=\sum_{1\le i<j\le n-1}\lambda^{ij} \mathbf e_i\wedge \mathbf e_j.
\end{equation}
and differentiating the constraints \eqref{r-veze} by using \eqref{sfera-sfera-r3} and \eqref{E1}. We get the Lagrange multipliers
$\lambda^{ij}$ in the form
\begin{equation}\label{lambde}
\lambda^{ij}=-\sum_{1\le k<l\le n-1} \langle \mathbf e_{k}\wedge
\mathbf e_l,\mathbf I^{-1}[m,\omega]\rangle \mathbf A^{ij,kl},
\end{equation}
where $\mathbf A^{ij,kl}$ is the inverse of the matrix $\mathbf
A_{ij,kl}=\langle \mathbf e_i\wedge \mathbf e_j,\mathbf I^{-1} \mathbf e_k\wedge
\mathbf e_l\rangle$.

The \emph{extended system} system on
\[
\mathcal M=\{(m,\mathbf e_1,\dots,\mathbf e_n)\,\vert \, m\in so(n)^*, \mathbf e_i\in \R^n, \, (\mathbf e_i,\mathbf e_j)=\delta_{ij}, \, 1\le i,j\le n\},
\]
is defined by the equation \eqref{sfera-sfera-r3} together with
\eqref{Ei}, \eqref{L0}, \eqref{lambde},
and the functions
\begin{equation}\label{geo-int}
\phi_{ij}=\langle \omega,\mathbf e_i\wedge \mathbf e_j\rangle, \qquad  1\le i<j\le n-1
\end{equation}
are its first integrals.

On the other hand, let
$
\mathcal N=so(n)^*\times\prod_{1\le i<j\le n}\mathcal O(\mathbf e_i\wedge\mathbf e_j),
$
where $\mathcal O(\mathbf e_i\wedge \mathbf e_j)$ is the adjoint orbit of $\mathbf e_i\wedge \mathbf e_j$ in $so(n)$.
The closed system defined by \eqref{sfera-sfera-r3}, \eqref{E1}, \eqref{L0}, \eqref{lambde}  on $\mathcal N$
is an example of a $\epsilon$-modified LR system introduced in \cite{Jo5}.
Now, the functions \eqref{geo-int} and
$\psi_{ij,kl}=\langle \mathbf e_i\wedge \mathbf e_j, \mathbf e_k\wedge \mathbf e_l\rangle$, $1\le i<j\le n,\,1\le k<l\le n$
are its first integrals. Also, the system has an invariant measure  (see Theorem 1, \cite{Jo5}):
\[
\mu_\epsilon\,\mathrm dm \bigwedge_{1\le i< j\le n}  \mathrm d(\mathbf e_i\wedge \mathbf e_j)\vert_\mathcal N,
\]
where
\begin{equation}\label{extended}
\mu_\epsilon=(\det \mathbf A_{ij,kl})^\frac{1}{2\epsilon}\qquad (1\le i<j\le n-1, \, 1\le k<l\le n-1).
\end{equation}

It easily follows that the extended system has an invariant measure
\[
\mu_\epsilon\,\mathrm dm\wedge \mathrm d\mathbf e_1 \wedge \dots \mathbf e_n\vert_{\mathcal M}
\]
(the replacing of equations \eqref{E1} by \eqref{Ei} do not reflect essentially on the corresponding Liouville equation).
Note that  the density \eqref{extended} of the extended system coincides with \eqref{gustina} and
the above statement implies invariant measures of the equations \eqref{sfera-sfera-r2} and \eqref{PE*}, \eqref{sfera-sfera-r3} given in Theorem \ref{IMRR}.

Next, by introducing the momentum
\begin{align}\label{MOMENT}
\mathbf m &= \pr_{\mathfrak v_\gamma} \mathbf
I\omega+\pr_{\mathfrak h_\gamma}\omega\\
&=\omega+ \gamma\otimes\gamma\,(\mathbf I\omega-\omega)+(\mathbf
I\omega-\omega)\,\gamma\otimes\gamma\in so(n)\cong
so(n)^*\nonumber,
\end{align}
we can describe the extended system  without using additional variables $\mathbf e_i$, $i=1,\dots,n-1$.
We have the \emph{momentum equation} (see \cite{Jo5}, i.e, \cite{FeJo} for $\epsilon=1$)
\begin{equation}\label{MOMENT*}
\dot{\mathbf m}=\epsilon [\mathbf
m,\omega]+(1-\epsilon)\pr_{\mathfrak v_\gamma}[\mathbf I\omega,\omega].
\end{equation}
Thus, we obtain an alternative description of the extended system on $so(n)^*\times S^{n-1}$ given by  \eqref{PE*} and \eqref{MOMENT*}.
It leads to the dual expression for an invariant
measure (see Theorems 2 and 4, \cite{Jo5}).

Let $A=\diag(a_1,\dots,a_n)$, where $a_1,\dots,a_n$ are parameters of the special inertia operator \eqref{spec-op}.

\begin{thm}\label{RS2}
{\rm (i)} The extended system \eqref{PE*}, \eqref{MOMENT*} of the rubber rolling of a ball over a fixed sphere in variables $(\mathbf m,\gamma)$ has an invariant measure
$\tilde \mu_\epsilon\,{\mathrm d}{\mathbf m}\wedge {\mathrm
d}\gamma$,
\begin{equation}\label{tilde mu}
\tilde{\mu}_\epsilon(\gamma)=(\det\mathbf I\vert_{\mathfrak v_\gamma})^{\frac{1}{2\epsilon}-1}.
\end{equation}

{\rm (ii)} For  $\mathbb I$ defined by \eqref{spec-op}, i.e., $\mathbf I(E_i\wedge E_j)=a_ia_j E_i\wedge E_j$,
the density \eqref{tilde mu} is proportional to
$$
(\mathbf \gamma, A \gamma)^{(\frac{1}{2\epsilon}-1)({n-2})}.
$$
\end{thm}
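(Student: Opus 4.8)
The plan is to deduce (i) from the invariant measure of the extended system already obtained on $\mathcal N$ (and $\mathcal M$) by passing to the momentum variables $(\mathbf m,\gamma)$ via a change of variables, and to obtain (ii) by a direct determinant evaluation for the operator \eqref{spec-op}. For (i) I would first record that, for a fixed $\gamma$, the map $\omega\mapsto\mathbf m$ of \eqref{MOMENT} is the linear operator $\mathcal A_\gamma=\pr_{\mathfrak v_\gamma}\circ\mathbf I+\pr_{\mathfrak h_\gamma}$ on $so(n)$. In the orthogonal splitting $so(n)=\mathfrak v_\gamma\oplus\mathfrak h_\gamma$ this operator is block upper--triangular, with diagonal blocks $\pr_{\mathfrak v_\gamma}\mathbf I\vert_{\mathfrak v_\gamma}$ and the identity on $\mathfrak h_\gamma$, so its Jacobian is $\det\mathcal A_\gamma=\det(\mathbf I\vert_{\mathfrak v_\gamma})$.

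Next I would transport the measure \eqref{extended}. Since $m=\mathbf I\omega$ has constant Jacobian, on $\mathcal M$ the density $(\det\mathbf I^{-1}\vert_{\mathfrak h_\gamma})^{1/2\epsilon}$ is unchanged in the variable $\omega$; as it depends only on $\gamma$, integrating out the frame $\mathbf e_1,\dots,\mathbf e_{n-1}$ leaves, up to a constant, the measure $(\det\mathbf I^{-1}\vert_{\mathfrak h_\gamma})^{1/2\epsilon}\,\mathrm d\omega\wedge\mathrm d\gamma$ on $so(n)^*\times S^{n-1}$. A Schur--complement identity for $\mathbf I$ relative to $\mathfrak v_\gamma\oplus\mathfrak h_\gamma$ gives $\det(\mathbf I^{-1}\vert_{\mathfrak h_\gamma})=\det(\mathbf I\vert_{\mathfrak v_\gamma})/\det\mathbf I$, with $\det\mathbf I$ constant, so this density is proportional to $(\det\mathbf I\vert_{\mathfrak v_\gamma})^{1/2\epsilon}$. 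Dividing by the Jacobian $\det(\mathbf I\vert_{\mathfrak v_\gamma})$ of $\omega\mapsto\mathbf m$ yields $\tilde\mu_\epsilon=(\det\mathbf I\vert_{\mathfrak v_\gamma})^{1/2\epsilon-1}$, which is (i). Equivalently, this is the dual measure of Theorems 2 and 4 of \cite{Jo5}, and one could instead verify the Liouville equation for $\tilde\mu_\epsilon\,\mathrm d\mathbf m\wedge\mathrm d\gamma$ directly from \eqref{MOMENT*} and \eqref{PE*}.

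For (ii) I would exploit that the operator \eqref{spec-op} satisfies $\mathbf I(X)=AXA$, whence $\mathbf I(v\wedge\gamma)=Av\wedge A\gamma$. Taking an orthonormal basis $f_1,\dots,f_{n-1}$ of $\gamma^\perp$, the wedges $f_i\wedge\gamma$ form an orthonormal basis of $\mathfrak v_\gamma$, and by the identity $\langle u\wedge w,p\wedge q\rangle=(u,p)(w,q)-(u,q)(w,p)$ the matrix of $\pr_{\mathfrak v_\gamma}\mathbf I\vert_{\mathfrak v_\gamma}$ is $M=cP-bb^T$, with $c=(\gamma,A\gamma)$, $P_{ij}=(f_i,Af_j)$ and $b_i=(f_i,A\gamma)$. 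The matrix determinant lemma then gives $\det M=c^{\,n-2}\det(P)\,(c-b^TP^{-1}b)$, and since $c-b^TP^{-1}b$ is the Schur complement of $P$ in the matrix of $A$ in the orthonormal basis $f_1,\dots,f_{n-1},\gamma$ (of determinant $\det A$), this equals $(\gamma,A\gamma)^{\,n-2}\det A$. As $\det A=a_1\cdots a_n$ is constant, $\tilde\mu_\epsilon=(\det\mathbf I\vert_{\mathfrak v_\gamma})^{1/2\epsilon-1}$ is proportional to $(\gamma,A\gamma)^{(1/2\epsilon-1)(n-2)}$.

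I expect the main obstacle to be the bookkeeping in (i): one must check that the extended flow on $\mathcal M$ genuinely descends to an autonomous flow on $so(n)^*\times S^{n-1}$ in the variables $(\mathbf m,\gamma)$, and that the frame $\mathbf e_1,\dots,\mathbf e_{n-1}$ can be integrated out without altering the density -- equivalently, that the projection $\mathcal M\to so(n)^*\times S^{n-1}$ intertwines the two flows and pushes the invariant measure forward. Once the fibration structure and the two determinant identities are in place, both parts are routine, the evaluation in (ii) being purely linear-algebraic.
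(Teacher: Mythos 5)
Your proposal is correct, and it amounts to a self-contained reconstruction of what the paper handles by citation. The paper's own justification of part (i) is the chain assembled in Subsection \ref{extended-system}: the extended system is identified as an $\epsilon$-modified LR system, the invariant measure with density \eqref{extended} $=$ \eqref{gustina} is quoted from Theorem 1 of \cite{Jo5}, and the dual density \eqref{tilde mu} in the variables $(\mathbf m,\gamma)$ is then obtained by invoking Theorems 2 and 4 of \cite{Jo5}; no Jacobian computation appears in the text, and the evaluation in part (ii) is not spelled out at all. You follow the same conceptual chain but prove the dual step yourself, and both determinant identities you rely on are right: the operator $\pr_{\mathfrak v_\gamma}\circ\mathbf I+\pr_{\mathfrak h_\gamma}$ is indeed block upper-triangular in the splitting $so(n)=\mathfrak v_\gamma\oplus\mathfrak h_\gamma$ with Jacobian $\det(\mathbf I\vert_{\mathfrak v_\gamma})$, and the Schur-complement identity $\det(\mathbf I^{-1}\vert_{\mathfrak h_\gamma})=\det(\mathbf I\vert_{\mathfrak v_\gamma})/\det\mathbf I$ converts the exponent $\tfrac{1}{2\epsilon}$ of Theorem \ref{IMRR} into $\tfrac{1}{2\epsilon}-1$, exactly as needed. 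Likewise your part (ii) is correct: \eqref{spec-op} gives $\mathbf I(X)=AXA$, so $\mathbf I(v\wedge\gamma)=Av\wedge A\gamma$, and the matrix determinant lemma combined with the Schur complement of $A$ in the adapted orthonormal basis yields $\det(\mathbf I\vert_{\mathfrak v_\gamma})=(\gamma,A\gamma)^{n-2}\det A$, with $\det A$ constant. The two caveats you flag (that the flow on $\mathcal M$ descends to an autonomous flow in $(\mathbf m,\gamma)$, and that the frame variables can be integrated out without changing the density) are precisely the facts the paper establishes when it describes the extended system without the variables $\mathbf e_i$: the multiplier $\lambda_0$ in \eqref{L0}--\eqref{lambde} depends only on the subspace $\mathfrak h_\gamma$, hence only on $\gamma$ and $m$, so your argument closes. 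What your route buys is a proof verifiable within the paper itself; what the paper's route buys is brevity, since \cite{Jo5} packages this linear algebra once for all modified LR systems.
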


It is also clear that the momentum equation \eqref{MOMENT*}, together with $\dot g=g\omega$ and \eqref{PE*}, defines extended system on $T^*SO(n)\times S^{n-1}$
with an invariant measure $\tilde \mu_\epsilon\,\varOmega^d\wedge \mathrm d\gamma$.

\section{Reduction of $SO(n)$--symmetry}

\subsection{Chaplygin reduction to $TS^{n-1}$} As we already mentioned, the problem of the rubber rolling of a ball over a fixed sphere
is a $SO(n)$-Chaplygin system with respect to the action
\eqref{left}. We have the principal bundle \eqref{principal}, \eqref{submersion},
together with the principal connection
\begin{align}\label{connection}
 T_{(g,\mathbf r)} SO(n)\times S^{n-1}&  ={\mathcal F^\pm}_{(g,\mathbf
r)}\oplus \ker d\pi_{(g,\mathbf r)}, \\
 \ker d\pi_{(g,\mathbf r)} & =so(n)\cdot (g,\mathbf r).\nonumber
\end{align}

The system reduces to the tangent bundle $TS^{n-1}\cong \mathcal
F^\pm/SO(n)$.
The procedure of reduction for rubber rolling over a sphere for
$n=3$ is given by Ehlers and Koiller \cite{EK}. Note that in this
case the system is always Hamintonizable due to the fact that it
has an invariant measure and that the reduced configuration space
is 2--dimensional. We proceed with a reduction of $n$--dimensional
variant of the problem.

Recall that the vector in ${\mathcal F^\pm}_{(g,\mathbf r)}$ are
called \emph{horizontal}, while the vectors in $\ker
d\pi_{(g,\mathbf r)}$ {\it vertical}. The \emph{horizontal lift}
$\dot \gamma^h$ of the base vector $\dot\gamma\in T_\gamma
S^{n-1}$ to the horizontal space ${\mathcal F^\pm}$ at the point
$(g,\mathbf r)\in\pi^{-1}(\gamma)$ is the unique vector in
$\mathcal F^\pm_{(g,\mathbf r)}$ satisfying
$d\pi(\gamma^h)=\dot\gamma$.

\begin{lem}\label{lift}
The reduced Lagrangian on $TS^{n-1}=\mathcal F^\pm/SO(n)$ reads
\begin{align*}
L_{red}(\dot\gamma,\gamma)= \frac{1}{2\epsilon^2}\langle \mathbf
I(\gamma\wedge\dot\gamma),\gamma\wedge\dot\gamma\rangle=-\frac1{4\epsilon^2}\tr(\mathbf
I(\gamma\wedge\dot\gamma)\gamma\wedge\dot\gamma).
\end{align*}
\end{lem}

\begin{proof}
The horizontal lift $\dot\gamma^h\vert_{(g,\mathbf
r)}=(\omega,\mathbf V)$ is given by:
\begin{eqnarray*}
&& \omega=\frac{1}{\epsilon}\gamma\wedge\dot\gamma=\frac{\sigma\pm\rho}{\sigma}\gamma\wedge\dot\gamma,\\
&& \mathbf V=\dot{\mathbf r}=(\sigma\pm
\rho)\frac{d}{dt}(g\gamma)=(\sigma\pm \rho) (\dot
g\gamma+g\dot\gamma)=(\sigma\pm \rho)
(g\frac{1}{\epsilon}(\gamma\wedge\dot\gamma)
\gamma+g\dot\gamma)\\
&&\quad=(\sigma\pm
\rho)\big(1-\frac{1}\epsilon\big)g\dot\gamma=-(\sigma\pm
\rho)\big(\pm\frac\rho\sigma\big) g\dot\gamma.
\end{eqnarray*}

As a result, the reduced Lagrangian is
\begin{align*}
L_{red}(\dot\gamma,\gamma) = L(\dot\gamma^h\vert_{(g,\mathbf
r)},g,\mathbf r))_{(g,\mathbf
r)\in\pi^{-1}(\gamma)}
=\frac{1}{2\epsilon^2}\langle \mathbb
I(\gamma\wedge\dot\gamma),\gamma\wedge\dot\gamma\rangle+\frac{D}{2\epsilon^2}
(\dot\gamma,\dot\gamma),
\end{align*}
which proves the statement.
\end{proof}

The reduced Lagrange--d'Alembert equation describing the motion of
the system on a sphere $S^{n-1}$ takes the form
\begin{equation}\label{RedEq}
\Big( \frac{\partial L_{red}}{\partial \gamma} - \frac{d}{dt}
\frac{\partial L_{red}}{\partial \dot \gamma}, \xi\Big) =
\langle J_{(g,\mathbf r)}(\dot\gamma^h), K_{(g,\mathbf
r)}(\dot\gamma^h,\xi^h) \rangle, \qquad  \xi\in
T_\gamma S^{n-1},
\end{equation}
where $(g,\mathbf r)\in \pi^{-1}(\gamma)$, $K(\cdot,\cdot)$ is
$so(n)$--valued curvature of the connection, and $J$ is the
momentum mapping of $SO(n)$--action \eqref{left} (see \cite{Koi,
BKMM}).

It is well known that the momentum mapping
$$
J: T(SO(n)\times S^{n-1})\to so(n)\cong so(n)^*
$$
of the action \eqref{left} is given by
$$
J_{(g,\mathbf r)}(\omega,\mathbf V)=\Ad_g(\mathbb I\omega)+m\mathbf V
\wedge \mathbf r.
$$
Therefore,
\begin{align*} J_{(g,\mathbf r)}(\dot\gamma^h) &=\frac{1}\epsilon \Ad_g \mathbb I(\gamma\wedge\dot\gamma)- \frac{m}{\epsilon}
(\sigma\pm\rho)\big(\pm\frac\rho\sigma\big) g\dot\gamma\wedge \mathbf r\\
&=
\Ad_g\Big(\frac{1}\epsilon \mathbb I(\gamma\wedge\dot\gamma)\pm m
(\sigma\pm \rho)^2\frac\rho\sigma(
\gamma\wedge\dot\gamma ) \Big)\\
&=\frac{1}\epsilon\Ad_g\Big( \mathbb
I(\gamma\wedge\dot\gamma)\pm D
\frac{\sigma\pm \rho}{\rho}( \gamma\wedge\dot\gamma )\Big)\\
&= \frac{1}\epsilon\Ad_g\Big( \mathbb I(\gamma\wedge\dot\gamma)+
\frac{D}{1-\epsilon}( \gamma\wedge\dot\gamma )\Big).
\end{align*}

Let $\xi_1,\xi_2\in \mathcal F^\pm_{(g,\mathbf r)}$. By definition, the curvature
$K_{(g,\mathbf r)}(\xi_1,\xi_2)$ is the element $\eta\in so(n)$,
such that $\eta\cdot (g,\mathbf r)$ is the vertical component of
the commutator of vector fields $[X_2,X_1]$ at $(g,\mathbf r)$,
where $X_1$ and $X_2$ are smooth horizontal extensions of $\xi_1$
and $\xi_2$.

\begin{lem}\label{curvature} Let $\xi_1,\xi_2\in T_\gamma S^{n-1}$ and $(g,\mathbf
r)\in\pi^{-1}(\gamma)$.  Then
$$
K_{(g,\mathbf
r)}(\xi_1^h,\xi_2^h)=(1-\frac{\rho^2}{\sigma^2})\Ad_g(\xi_2\wedge\xi_1)=
\frac{2\epsilon-1}{\epsilon^2}\Ad_g(\xi_2\wedge\xi_1).
$$
In particular, for $\epsilon=1/2$, i.e, $\rho=\sigma$, the
curvature vanish and the constraints are holonomic.
\end{lem}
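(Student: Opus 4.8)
The plan is to work in the coordinates $(g,\gamma)$ of \eqref{submersion}, in which the $SO(n)$-action \eqref{left2} reads $a\cdot(g,\gamma)=(ag,\gamma)$; hence $\pi(g,\gamma)=\gamma$, and the vertical space at $(g,\gamma)$ is $\{(\eta g,0):\eta\in so(n)\}$. From Lemma \ref{lift}, in the left trivialisation $\dot g=g\omega$ the horizontal lift of $\xi\in T_\gamma S^{n-1}$ has $g$-value $\omega=\tfrac1\epsilon\gamma\wedge\xi$ and base value $\xi$ (one verifies $\dot\gamma=\xi$ directly from \eqref{PE*}, since $(\gamma\wedge\xi)\gamma=-\xi$). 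Reading off the connection form, a tangent vector with left-trivialised $g$-value $X$ and base value $w\in T_\gamma S^{n-1}$ has vertical part with $g$-value $X-\tfrac1\epsilon\gamma\wedge w$, because its horizontal part is the lift of its projection $w$.

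Next I would extend $\xi_1,\xi_2$ to vector fields $V_1,V_2$ on $S^{n-1}$ and form the horizontal fields $X_i$ with $g$-value $\omega_i=\tfrac1\epsilon\gamma\wedge V_i$ and base value $V_i$; tensoriality of $K$ makes the choice of extension irrelevant. Using the standard bracket formula for fields on a product $G\times S^{n-1}$ whose coefficients do not depend on $g$,
$$[X_2,X_1]=\Big(g\big([\omega_2,\omega_1]+D_{V_2}\omega_1-D_{V_1}\omega_2\big),\;[V_2,V_1]\Big),$$
where $D_{V_i}$ is the directional derivative along $V_i$. Subtracting the horizontal lift of the base component $[V_2,V_1]$ (the operation described above) gives the vertical part, whose $g$-value is
$$\zeta=[\omega_2,\omega_1]+D_{V_2}\omega_1-D_{V_1}\omega_2-\tfrac1\epsilon\gamma\wedge[V_2,V_1].$$

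The evaluation of $\zeta$ then rests on two identities. From the wedge-commutator relation $[a\wedge b,c\wedge d]=(b,c)\,a\wedge d-(b,d)\,a\wedge c-(a,c)\,b\wedge d+(a,d)\,b\wedge c$ together with $(\gamma,\xi_i)=0$ and $(\gamma,\gamma)=1$, I obtain $[\omega_2,\omega_1]=\tfrac1{\epsilon^2}[\gamma\wedge\xi_2,\gamma\wedge\xi_1]=-\tfrac1{\epsilon^2}\xi_2\wedge\xi_1$. For the derivative terms I use $D_{V}\gamma=V$ (the position map differentiates to its tangent vector) and $D_{V_2}V_1-D_{V_1}V_2=[V_2,V_1]$, which yields $D_{V_2}\omega_1-D_{V_1}\omega_2=\tfrac1\epsilon\big(2\,\xi_2\wedge\xi_1+\gamma\wedge[V_2,V_1]\big)$. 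The two copies of $\gamma\wedge[V_2,V_1]$ cancel — precisely the cancellation that confirms tensoriality — leaving $\zeta=\big(\tfrac2\epsilon-\tfrac1{\epsilon^2}\big)\xi_2\wedge\xi_1=\tfrac{2\epsilon-1}{\epsilon^2}\xi_2\wedge\xi_1$. Since the vertical vector $\eta\cdot(g,\gamma)=(\eta g,0)$ has left-trivialised $g$-value $\Ad_{g^{-1}}\eta$, setting $\Ad_{g^{-1}}\eta=\zeta$ gives $\eta=\Ad_g\big(\tfrac{2\epsilon-1}{\epsilon^2}\xi_2\wedge\xi_1\big)$; finally, writing $\tfrac1\epsilon=1\pm\tfrac\rho\sigma$ and $t=\pm\tfrac\rho\sigma$ gives $\tfrac{2\epsilon-1}{\epsilon^2}=(1+t)(1-t)=1-\tfrac{\rho^2}{\sigma^2}$, and $\epsilon=\tfrac12$ (that is $\rho=\sigma$) makes the curvature vanish.

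The main obstacle I anticipate is the vertical projection: the vertical part of $[X_2,X_1]$ is \emph{not} its $g$-component, since horizontal lifts themselves carry a nonzero $g$-component, so one must subtract the horizontal lift of the base bracket $[V_2,V_1]$ before reading off $\eta$. Keeping this bookkeeping straight — together with the left/right trivialisation switch that produces the $\Ad_g$ in the answer — is the delicate point; once the wedge-commutator identity is in hand and the $\gamma\wedge[V_2,V_1]$ terms are seen to cancel, the algebra collapses immediately.
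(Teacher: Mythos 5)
Your proof is correct and follows essentially the same route as the paper's: the same identification of the vertical vector $\eta\cdot(g,\gamma)$ with left-trivialised value $\Ad_{g^{-1}}\eta$, the same horizontal lift $\omega=\frac{1}{\epsilon}\gamma\wedge\xi$, and the same two contributions to the bracket of horizontal extensions, namely $-\frac{1}{\epsilon^2}\,\xi_2\wedge\xi_1$ from the commutator of the $g$-components and $+\frac{2}{\epsilon}\,\xi_2\wedge\xi_1$ from differentiating the connection coefficient along the base. The only difference is bookkeeping: the paper takes commuting, constant-coefficient extensions in local coordinates on the half-sphere, so that $[X_1,X_2]$ has zero base component and is automatically vertical, whereas you allow arbitrary extensions and subtract the horizontal lift of $[V_2,V_1]$ --- the cancellation of your $\gamma\wedge[V_2,V_1]$ terms is precisely what the paper's choice of extensions makes vanish identically.
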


\begin{remark}{\rm
Note that the factor $1-\frac{\rho^2}{\sigma^2}$ equals to
$1-K_1/K_2$ where $K_1$ and $K_2$ are curvatures of the fixed and
rolling sphere, respectively.  The same factor appears in the case
of rubber rolling of arbitrary two surfaces in $\R^3$ (see
\cite{BH}).
}\end{remark}

Since $\langle
\gamma\wedge\dot\gamma,\dot\gamma\wedge\xi\rangle=0$, we can
replace $J$ by $\frac{1}\epsilon\Ad_g\left( \mathbf
I(\gamma\wedge\dot\gamma)\right)$ at the right hand side of \eqref{RedEq}, and we get the J-K term in the form
\begin{eqnarray*}
&&\langle J_{(g,\mathbf r)}(\dot\gamma^h), K_{(g,\mathbf
r)}(\dot\gamma^h,\xi^h)
\rangle=\frac{2\epsilon-1}{\epsilon^3}\langle\mathbf
I(\gamma\wedge\dot\gamma),\xi\wedge\dot\gamma\rangle\\
&&\qquad=-\frac{2\epsilon-1}{2\epsilon^3}\tr(\mathbf
I(\gamma\wedge\dot\gamma)\cdot
(\xi\otimes\dot\gamma-\dot\gamma\otimes\xi))=\frac{2\epsilon-1}{\epsilon^3}(\mathbf
I(\gamma\wedge\dot\gamma)\dot\gamma,\xi).
\end{eqnarray*}

We have
$$
\frac{\partial L_{red}}{\partial
\gamma}=\frac{1}{\epsilon^2}\mathbf
I(\gamma\wedge\dot\gamma)\dot\gamma, \qquad \frac{\partial
L_{red}}{\partial \dot \gamma}=-\frac{1}{\epsilon^2}\mathbf
I(\gamma\wedge\dot\gamma)\gamma.
$$

Therefore, we obtain the following statement.

\begin{thm}\label{REDsym}
The Lagrange--d'Alembert equation describing the motion of the
reduced system are given by
\begin{equation}\label{REDUCED}
\Big(\epsilon\frac{d}{dt}\big(\mathbf
I(\gamma\wedge\dot\gamma)\gamma\big)+(1-\epsilon)\mathbf
I(\gamma\wedge\dot\gamma)\dot\gamma,\xi\Big) =0 , \qquad \xi\in
T_\gamma S^{n-1}.
\end{equation}
\end{thm}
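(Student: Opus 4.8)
The plan is to substitute the three quantities already computed in the paragraphs preceding the theorem---the two partial derivatives of $L_{red}$ and the curvature pairing $\langle J,K\rangle$---into the reduced Lagrange--d'Alembert equation \eqref{RedEq}, and then to collect the $\epsilon$-dependent coefficients. No further geometric input is needed: the horizontal lift of Lemma \ref{lift}, the curvature of Lemma \ref{curvature}, and the momentum-map evaluation have already produced every ingredient, so the remaining task is purely algebraic.

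First I would assemble the left-hand side of \eqref{RedEq}. Since $\partial L_{red}/\partial\dot\gamma=-\frac{1}{\epsilon^2}\mathbf I(\gamma\wedge\dot\gamma)\gamma$, the sign carried by this derivative combines with the minus sign in front of $\frac{d}{dt}$ to give
\begin{equation*}
\frac{\partial L_{red}}{\partial\gamma}-\frac{d}{dt}\frac{\partial L_{red}}{\partial\dot\gamma}
=\frac{1}{\epsilon^2}\mathbf I(\gamma\wedge\dot\gamma)\dot\gamma
+\frac{1}{\epsilon^2}\frac{d}{dt}\big(\mathbf I(\gamma\wedge\dot\gamma)\gamma\big).
\end{equation*}
Pairing with $\xi$ and equating to the right-hand side $\frac{2\epsilon-1}{\epsilon^3}\big(\mathbf I(\gamma\wedge\dot\gamma)\dot\gamma,\xi\big)$, I would multiply through by $\epsilon^3$ to clear all denominators.

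The final step is the coefficient bookkeeping, and this is the only place that demands care. After clearing denominators, the two terms proportional to $\big(\mathbf I(\gamma\wedge\dot\gamma)\dot\gamma,\xi\big)$ carry coefficients $\epsilon$ and $-(2\epsilon-1)$, which sum to $1-\epsilon$, while the remaining term $\big(\frac{d}{dt}(\mathbf I(\gamma\wedge\dot\gamma)\gamma),\xi\big)$ retains the factor $\epsilon$; together these reproduce \eqref{REDUCED} verbatim. There is no genuine analytic obstacle here, only the need to track the two sign contributions correctly and to verify the identity $\epsilon-(2\epsilon-1)=1-\epsilon$, since any slip would destroy the clean separation between the $\frac{d}{dt}$ term and the algebraic term in the reduced equation.
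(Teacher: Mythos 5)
Your proposal is correct and is exactly the paper's own derivation: the paper obtains Theorem \ref{REDsym} ("Therefore, we obtain...") precisely by substituting the computed derivatives $\frac{\partial L_{red}}{\partial\gamma}$, $\frac{\partial L_{red}}{\partial\dot\gamma}$ and the J-K term $\frac{2\epsilon-1}{\epsilon^3}\big(\mathbf I(\gamma\wedge\dot\gamma)\dot\gamma,\xi\big)$ into \eqref{RedEq} and collecting coefficients after multiplying by $\epsilon^3$. Your bookkeeping, including the sign from $-\frac{d}{dt}\frac{\partial L_{red}}{\partial\dot\gamma}$ and the identity $\epsilon-(2\epsilon-1)=1-\epsilon$, matches the paper's computation verbatim.
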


The above reduction slightly differs from the Chaplygin
$SO(n-1)$--reduction of the Veselova problem  studied in \cite{FeJo}.

\medskip

{\sc Proof of Lemma \ref{curvature}.}
In the coordinates $(g,\gamma)$, the $SO(n)$-action
takes the form \eqref{left2}. Let $\eta\in so(n)$. The associated
vector field on $SO(n)\times S^{n-1}$ with respect to the action
\eqref{left2} is given by
$$
\eta\cdot (g,\gamma)\cong (\Ad_{g^{-1}}\eta,0)\in T_{(g,\gamma)}
SO(n)\times S^{n-1},
$$
where, as above, we use the left trivialization of $TSO(n)$.
Further, the horizontal and vertical components of the vector
$(\omega,\xi)\in T_{(g,\gamma)}(SO(n)\times S^{n-1})$,
respectively, simply read
\begin{eqnarray*}
&& (\omega,\xi)^H=(\frac{1}{\epsilon}\gamma\wedge\xi,\xi), \\
&& (\omega,\xi)^V=(\omega-\frac{1}{\epsilon}\gamma\wedge\xi,0).
\end{eqnarray*}

Now, let $\xi_1,\xi_2$ be vector fields, the extensions of
$\xi_1,\xi_2\in T_{\gamma_0} S^{n-1}$ defined  in a neighborhood
$U$ of $\gamma_0$, and $X_1, X_2$ their horizontal lifts to
$SO(n)\times U$:
$$
X_i(g,\gamma)=(\frac{1}{\epsilon}\gamma\wedge\xi_i,\xi_i)=Y_i+Z_i,
\quad Y_i=(\frac{1}{\epsilon}\gamma\wedge\xi_i,0), \quad
Z_i=(0,\xi_i), \quad i=1,2.
$$

Then, by definition
$$
\langle
K_{(g,\gamma_0)}(\xi_1^h,\xi_2^h),\eta\rangle=\langle-[X_1,X_2]\vert_{(g,\gamma_0)}^V,\Ad_{g^{-1}}\eta\rangle,
$$
i.e.,
\begin{equation}\label{curv}
K_{(g,\gamma_0)}(\xi_1^h,\xi_2^h)=-\Ad_{g}[X_1,X_2]\vert_{(g,\gamma_0)}^V.
\end{equation}

We shall prove
\begin{equation}\label{curv2}
[X_1,X_2]^V=[X_1,X_2]=\big(\frac{2}{\epsilon}-\frac{1}{\epsilon^2}\big)(\xi_1\wedge\xi_2,0),
\end{equation}
which, according to \eqref{curv}, proves the lemma.

Without loosing a generality we may suppose that
$\gamma_0=(0,0,\dots,0,1)^T$. Let $(q_1,\dots,q_{n-1})\in U$ be
the local coordinates on the upper half-sphere
$S_+^{n-1}=\{\gamma\in S^{n-1}\,\vert\, \gamma_n>0\}$ defined by
\begin{eqnarray*}
&&\gamma_i=q_i, \qquad \quad i=1,\dots,n-1,\\
&&\gamma_n=\sqrt{1-q_1^2-\dots-q_{n-1}^2},\\
&& U=\{(q_1,\dots,q_{n-1})\in\R^{n-1}\,\vert\,
q_1^2+\dots+q_{n-1}^2<1\}.
\end{eqnarray*}
The given vectors $\xi_1,\xi_2\in T_{\gamma_0}S^{n-1}$ have the
form $(\xi_i^1,\dots,\xi_i^{n-1},0)^T$, $i=1,2$. By taking
$\xi_i^j=const$,
\begin{equation}\label{VF}
\xi_i=\sum_{j=1}^{n-1} \xi_i^j \frac{\partial}{\partial q_j}, \qquad i=1,2,
\end{equation}
define their natural commutative extensions to $U$. Note that
$\partial/\partial q_i$ corresponds to the vector field
$$
E_i-\frac{q_i}{\sqrt{1-q_1^2-\dots-q_{n-1}^2}}E_n=E_i-\frac{\gamma_i}{\gamma_n}E_n
$$
in redundant variables on $S_+^{n-1}\subset\R^n$, where we consider \eqref{baza} as vector fields on $\R^n$. Whence, in
redundant variables the
vector fields \eqref{VF} read
$$
\xi_i=(\xi^1_i,\dots,\xi^{n-1}_i,\xi_i^n)^T=(\xi^1_i,\dots,\xi^{n-1}_i,-\frac{1}{\gamma_n}(\xi_i^1\gamma_1+\dots+\xi_i^{n-1}\gamma_{n-1}))^T,
$$
$i=1,2$, implying the identities
\begin{align}
&\xi_i(\gamma_j)=\xi_i(q_j)=\xi^j_i,  \qquad
j=1,\dots,n-1, \nonumber\\
&\xi_i(\gamma_n)=\xi_i(\sqrt{1-q_1^2-\dots-q_{n-1}^2})=-\frac{\xi_i^1q_1+\dots+\xi_i^{n-1}
q_{n-1}}{\sqrt{1-q_1^2-\dots-q_{n-1}^2}}=\xi_i^n.\label{izvodi}
\end{align}

Let $E_{kl}=(E_k\wedge E_l,0)$. Then $Y_i=\sum_{k<l} y_i^{kl}
E_{kl}$, where
$$
y_i^{kl}=\frac{1}{\epsilon}\big((\gamma,E_k)(\xi_i,E_l)-(\gamma,E_l)(\xi_i,E_k)\big)=\frac{1}{\epsilon}\big(\gamma_k\xi_i^l-\gamma_l\xi_i^k\big),
$$
Next, due to the relations
$$
[E_{ij},E_{kl}]=([E_i\wedge
E_j,E_k\wedge E_l],0), \quad [E_{kl}, Z_i]=0, \quad [Z_1,Z_2]=0,
$$
on
$SO(n)\times U$, we get:
\begin{align}
\label{curv3} [X_1,X_2] =& \sum_{k<l,i<j}
[y_{1}^{kl}E_{kl},y_{2}^{ij}E_{ij}]+\sum_{i<j}[y_{1}^{ij}
E_{ij},Z_2]+\sum_{k<l}[Z_1,y_{2}^{kl} E_{kl}]\\
\nonumber =& (\frac{1}{\epsilon^2}[\gamma\wedge
\xi_1,\gamma\wedge\xi_2],0)+\sum_{i<j}
y_{1}^{ij}[E_{ij},Z_2]-\sum_{i<j} Z_2(y_{1}^{ij})E_{ij}\\
\nonumber & \quad \qquad
+\sum_{k<l}y_{2}^{kl}[Z_1,E_{kl}]+\sum_{k<l}
Z_1(y_{2}^{kl})E_{kl}\\
\nonumber  =& -\frac 1{\epsilon^2}([\xi_1,\xi_2],0)+\sum_{i<j}
(\xi_1(y_{2}^{ij})-\xi_2(y_{1}^{ij}))E_{ij}
\end{align}

On the other hand, from \eqref{izvodi} we obtain
\begin{align*}
\xi_1(y_{2}^{ij})-\xi_2(y_{1}^{ij}) =&
\frac{1}{\epsilon}\xi_1\big(\gamma_i\xi_2^j-\gamma_j\xi_2^i\big)-
\frac{1}{\epsilon}\xi_2\big(\gamma_i\xi_1^j-\gamma_j\xi_1^i\big)\\
=& \frac{1}\epsilon\big(\xi_1^i\xi_2^j-\xi_1^j\xi_2^i\big)-\frac{1}\epsilon\big(\xi_2^i\xi_1^j-\xi_2^j\xi_1^i\big)\\
=&\frac{2}{\epsilon}\big((\xi_1,E_i)(\xi_2,E_j)-(\xi_1,E_j)(\xi_2,E_i)
  \big),
\end{align*}
which together with \eqref{curv3} implies the relation
\eqref{curv2}.
\hfill $\Box$

\subsection{The reduced system on $T^*S^{n-1}$.}
Consider the Legendre transformation
\begin{equation}\label{Leg}
p=\frac{\partial L_{red}}{\partial \dot
\gamma}=-\frac{1}{\epsilon^2}\mathbf
I(\gamma\wedge\dot\gamma)\gamma.
\end{equation}
The point $(p,\gamma)$ belongs to the cotangent bundle of a sphere
realized as a symplectic submanifold in the symplectic linear
space $(\R^{2n}(p,\gamma),dp_1\wedge
dq_1+\cdots+dp_n\wedge dq_n)$:
\begin{equation}\label{psi}
(\gamma,\gamma)=1, \qquad (\gamma,p)=0.
\end{equation}

Let $\dot\gamma=\dot\gamma(p,\gamma)$
be the inverse of the Legendre transformation and
$$
\Upsilon=\Upsilon(\gamma,p)=\frac{1}{\epsilon^2}\left(\mathbf
I\left(\gamma\wedge\dot\gamma\right)\right)\dot\gamma\vert_{\dot\gamma=\dot\gamma(p,\gamma)}.
$$

Then we can write the equations \eqref{REDUCED} in the form
\begin{equation*}\label{REDUCED*}
\left(-\epsilon \dot p+(1-\epsilon)\Upsilon,\xi\right) =0 , \qquad
\xi\in T_\gamma S^{n-1},
\end{equation*}
which is equivalent either to
\begin{equation}\label{REDUCED**}
\epsilon \gamma \wedge \dot p+(\epsilon-1)\gamma\wedge \Upsilon=0,
\end{equation}
or to
\begin{equation}\label{dot P}
\dot
p=\frac{(1-\epsilon)}{\epsilon}\Upsilon+\mu\gamma,
\end{equation}
where the multiplier $\mu$ is determined from the equation
$$
\frac{d}{dt}(\gamma,p)=(\dot\gamma,p)+(\dot
p,\gamma)=(\dot\gamma,p)+\frac{(1-\epsilon)}{\epsilon}(\Upsilon,\gamma)+\mu(\gamma,\gamma)=0.
$$

\begin{prop}
The reduced flow on on the cotangent bundle $T^*S^{n-1}$ realized with constraints
\eqref{psi} takes the following form
\begin{equation}\label{dot G}
\dot\gamma =X_\gamma(p,\gamma), \qquad
\dot p =X_p(p,\gamma),
\end{equation}
where $X_\gamma$ is the inverse of the Legendre transformation \eqref{Leg} and
$$
X_p=\frac{(1-\epsilon)}{\epsilon^3}\left(\mathbf
I\left(\gamma\wedge X_\gamma\right)\right)X_\gamma+\Big(\frac{(\epsilon-1)}{\epsilon^3}(\left(\mathbf
I\left(\gamma\wedge X_\gamma\right)\right)X_\gamma,\gamma)-(X_\gamma,p)\Big)\gamma.
$$
\end{prop}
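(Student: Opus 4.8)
\textsc{Proof proposal.} The plan is to read off both components of the vector field directly from the reduced equations established above, since all the analytic content is already contained in \eqref{REDUCED}--\eqref{dot P}. For the base component, the relation $\dot\gamma=X_\gamma(p,\gamma)$ is by definition the inverse of the Legendre transformation \eqref{Leg}; the only preliminary to check is that this inverse exists and lands in $T_\gamma S^{n-1}$. This follows because, for each fixed $\gamma$, $L_{red}$ is the quadratic form $\dot\gamma\mapsto \frac{1}{2\epsilon^2}\langle\mathbf I(\gamma\wedge\dot\gamma),\gamma\wedge\dot\gamma\rangle$ on $T_\gamma S^{n-1}$, which is positive definite: the operator $\mathbf I=\mathbb I+\mathbb E$ is positive definite with respect to \eqref{KF}, and $\gamma\wedge\dot\gamma=0$ forces $\dot\gamma=0$ whenever $\dot\gamma\perp\gamma$. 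Hence $\dot\gamma\mapsto p$ is a linear isomorphism of $T_\gamma S^{n-1}$ onto its image, so $X_\gamma$ is well defined and satisfies $(X_\gamma,\gamma)=0$. Moreover $(p,\gamma)=0$ holds automatically, since $\mathbf I(\gamma\wedge\dot\gamma)\in so(n)$ is skew-symmetric and $\gamma^T S\gamma=0$ for every skew $S$.

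For the fibre component I would start from \eqref{dot P}, namely $\dot p=\frac{1-\epsilon}{\epsilon}\Upsilon+\mu\gamma$, and substitute $\Upsilon=\frac{1}{\epsilon^2}\big(\mathbf I(\gamma\wedge\dot\gamma)\big)\dot\gamma$ evaluated at $\dot\gamma=X_\gamma$. This immediately produces the first term $\frac{1-\epsilon}{\epsilon^3}\big(\mathbf I(\gamma\wedge X_\gamma)\big)X_\gamma$ of $X_p$. It then remains to eliminate the multiplier $\mu$ using the requirement that the flow preserve the second constraint in \eqref{psi}, i.e. $\frac{d}{dt}(\gamma,p)=0$. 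Solving the identity $(\dot\gamma,p)+\frac{1-\epsilon}{\epsilon}(\Upsilon,\gamma)+\mu(\gamma,\gamma)=0$ for $\mu$ and using $(\gamma,\gamma)=1$ gives $\mu=-(X_\gamma,p)-\frac{1-\epsilon}{\epsilon^3}\big((\mathbf I(\gamma\wedge X_\gamma))X_\gamma,\gamma\big)$. Rewriting $\frac{\epsilon-1}{\epsilon^3}=-\frac{1-\epsilon}{\epsilon^3}$ turns this into precisely the coefficient of $\gamma$ in the stated $X_p$, so assembling $\frac{1-\epsilon}{\epsilon}\Upsilon+\mu\gamma$ reproduces the displayed formula.

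Finally I would confirm that \eqref{dot G} genuinely defines a flow on the symplectic submanifold cut out by \eqref{psi}: the constraint $(\gamma,\gamma)=1$ is preserved because $\dot\gamma=X_\gamma$ is tangent to the sphere, $(X_\gamma,\gamma)=0$, while $(\gamma,p)=0$ is preserved exactly by the choice of $\mu$ made above. I do not expect any genuine obstacle here, as the whole statement is a repackaging of the reduced equations once $\mu$ is solved for. The only points requiring a line of justification rather than pure bookkeeping are the invertibility of the Legendre map (equivalently, nondegeneracy of the reduced kinetic energy) and the sign conventions, which must be tracked carefully so that the factors $\frac{1-\epsilon}{\epsilon}$ and $\frac{\epsilon-1}{\epsilon^3}$ combine correctly.
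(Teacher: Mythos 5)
Your proposal is correct and takes essentially the same route as the paper: the paper's own derivation consists precisely of substituting $\Upsilon=\frac{1}{\epsilon^2}\big(\mathbf I(\gamma\wedge\dot\gamma)\big)\dot\gamma$ into \eqref{dot P} and determining the multiplier $\mu$ from $\frac{d}{dt}(\gamma,p)=0$ with $(\gamma,\gamma)=1$, exactly as you do. Your explicit verification that the Legendre transformation \eqref{Leg} is invertible on $T_\gamma S^{n-1}$ (via positive definiteness of the reduced kinetic energy) is a point the paper leaves implicit, but it is a harmless and correct addition.
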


\subsection{The momentum equation and an invariant measure}
Alternatively, the reduced equation \eqref{REDUCED**} can be derived by using the momentum equation \eqref{MOMENT*}.
After the reduction to the sphere $S^{n-1}$, we
obtain
\begin{eqnarray*}
&&\mathbf m=\frac{1}{\epsilon}\left(\mathbf
I(\gamma\wedge\dot\gamma)\gamma\right)\wedge \gamma=\epsilon
\gamma\wedge p,\qquad \dot{\mathbf m}=\epsilon \dot\gamma\wedge p+\epsilon
\gamma\wedge \dot p,\\
&& \omega=\frac{1}{\epsilon}\gamma\wedge\dot\gamma,\qquad\qquad\qquad\,\, [\mathbf m,\omega]=[\gamma\wedge
p,\gamma\wedge\dot\gamma]=\dot\gamma \wedge p, \\
&& [\mathbf I\omega,\omega]= \frac{1}{\epsilon^2}\left(\mathbf
I(\gamma\wedge\dot\gamma)\gamma\wedge\dot\gamma-\gamma\wedge\dot\gamma
\mathbf I(\gamma\wedge\dot\gamma)\right)=\dot\gamma\wedge
p+\gamma\wedge \Upsilon.
\end{eqnarray*}

By putting those expressions into \eqref{MOMENT*} we
get the equation \eqref{REDUCED**}:
\begin{align*}
&\epsilon \dot\gamma\wedge p+\epsilon \gamma\wedge \dot p=
(2\epsilon-1)\dot\gamma \wedge p+(1-\epsilon)(\dot\gamma\wedge
p+\gamma\wedge \Upsilon)\\
\Longleftrightarrow \quad & \epsilon \gamma\wedge \dot p+(\epsilon-1)(\gamma\wedge \Upsilon)=0.
\end{align*}

As a bi-product, we get the following statement.

\begin{thm}\label{RMIM}
The reduced equations \eqref{dot G} has an invariant measure
$$
(\det\mathbf I\vert_{\mathfrak v_\gamma})^{\frac{1}{2\epsilon}-1}\mathbf w^{n-1},
$$
where $\mathbf w$ is the canonical symplectic form
\begin{equation}\label{csf}
\mathbf w=dp_1\wedge d\gamma_1+\dots+dp_n\wedge
d\gamma_n\,\vert_{T^*S^{n-1}}
\end{equation}
\end{thm}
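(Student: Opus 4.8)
The plan is to realize the reduced flow \eqref{dot G} as the restriction of the extended system \eqref{PE*}, \eqref{MOMENT*} to an invariant submanifold, and then to push the invariant measure of Theorem \ref{RS2} down to $T^*S^{n-1}$. Consider the embedding $\Psi\colon T^*S^{n-1}\to so(n)^*\times S^{n-1}$, $\Psi(p,\gamma)=(\epsilon\,\gamma\wedge p,\gamma)$, whose image is $\mathcal S_0=\{(\mathbf m,\gamma)\,\vert\,\mathbf m\in\mathfrak v_\gamma\}$. Since $\pr_{\mathfrak h_\gamma}\mathbf m=\pr_{\mathfrak h_\gamma}\omega$ for $\mathbf m$ as in \eqref{MOMENT}, the functions \eqref{geo-int} satisfy $\phi_{ij}=\langle\mathbf m,\mathbf e_i\wedge\mathbf e_j\rangle$ and $\mathcal S_0=\{\phi_{ij}=0\}$; being first integrals of the extended system, they make $\mathcal S_0$ invariant. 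The computation preceding the statement --- namely $\mathbf m=\epsilon\,\gamma\wedge p$, $\omega=\frac{1}{\epsilon}\gamma\wedge\dot\gamma$, and the reduction of \eqref{MOMENT*} to \eqref{REDUCED**} --- shows exactly that $\Psi$ conjugates \eqref{dot G} with the extended flow on $\mathcal S_0$. By Theorem \ref{RS2} the extended system preserves $\tilde\mu_\epsilon\,\mathrm d\mathbf m\wedge\mathrm d\gamma$ with $\tilde\mu_\epsilon=(\det\mathbf I\vert_{\mathfrak v_\gamma})^{\frac{1}{2\epsilon}-1}$, so it only remains to transport this measure through $\Psi$, which is why the statement is a by-product.

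Next I would restrict the ambient measure to the level set $\mathcal S_0$ by the standard reduction of an invariant volume to a common level set of first integrals: if $\Pi=\sigma\wedge\bigwedge_{i<j}\mathrm d\phi_{ij}$, then $\sigma\vert_{\mathcal S_0}$ is an invariant volume on $\mathcal S_0$, well defined because each $\mathrm d\phi_{ij}$ pulls back to zero there. To make $\sigma$ explicit I use the $\langle\cdot,\cdot\rangle$-orthogonal splitting $so(n)^*=\mathfrak v_\gamma\oplus\mathfrak h_\gamma$ of \eqref{hg}, writing $\mathbf m=\mathbf m_v+\mathbf m_h$ and $\mathrm d\mathbf m=\mathrm d\mathbf m_v\wedge\mathrm d\mathbf m_h$. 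As $\mathbf e_i\wedge\mathbf e_j$, $1\le i<j\le n-1$, is an orthonormal basis of $\mathfrak h_\gamma$, the $\phi_{ij}$ are the coordinates of $\mathbf m_h$, so $\mathrm d\mathbf m_h=\bigwedge_{i<j}\mathrm d\phi_{ij}$ modulo terms carrying a base differential. Taking $\sigma=\tilde\mu_\epsilon\,\mathrm d\mathbf m_v\wedge\mathrm d\gamma$ and wedging with $\bigwedge\mathrm d\phi_{ij}$ reproduces $\Pi$, because the top-degree factor $\mathrm d\gamma$ annihilates every term of $\bigwedge\mathrm d\phi_{ij}$ containing a $\mathrm d\gamma$. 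Hence the induced invariant measure on $\mathcal S_0$ is $\tilde\mu_\epsilon\,\mathrm d\mathbf m_v\wedge\mathrm d\gamma$.

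Finally I would pull this back along $\Psi$. A direct computation with \eqref{KF} gives $\langle\gamma\wedge p,\gamma\wedge q\rangle=(p,q)$ for $p,q\in\gamma^\perp$, so $p\mapsto\gamma\wedge p$ is an isometry of $(\gamma^\perp,\text{Euclidean})$ onto $(\mathfrak v_\gamma,\langle\cdot,\cdot\rangle)$ and $p\mapsto\epsilon\,\gamma\wedge p$ scales volume by $\epsilon^{n-1}$. When pulling back $\mathrm d\mathbf m_v\wedge\mathrm d\gamma$, only the purely fibrewise part of $\Psi^*\mathrm d\mathbf m_v$ survives the wedge with $\mathrm d\gamma$ (all contributions with a base differential die, since $\mathrm d\gamma$ already saturates the $S^{n-1}$-directions), and this part equals $\epsilon^{n-1}\mathrm dp$. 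Thus $\Psi^*(\mathrm d\mathbf m_v\wedge\mathrm d\gamma)=\epsilon^{n-1}\,\mathrm dp\wedge\mathrm d\gamma=\frac{\epsilon^{n-1}}{(n-1)!}\mathbf w^{n-1}$, where $\mathbf w$ is the form \eqref{csf} and I use that the Liouville volume of $T^*S^{n-1}$ is the product of the fibre Lebesgue measure and the round volume of $S^{n-1}$. Since $\tilde\mu_\epsilon$ depends only on $\gamma$, dropping the constant factor gives the invariant measure $(\det\mathbf I\vert_{\mathfrak v_\gamma})^{\frac{1}{2\epsilon}-1}\mathbf w^{n-1}$, as claimed. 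The hard part is precisely this last pullback: because the fibre $\mathfrak v_\gamma$ twists as $\gamma$ varies, one must argue carefully that the $\gamma$-dependence of $\mathbf m_v=\epsilon\,\gamma\wedge p$ contributes only mixed terms that drop out after wedging with the top-degree base form $\mathrm d\gamma$.
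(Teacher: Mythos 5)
Your proposal is correct and follows essentially the same route as the paper: the paper's proof consists precisely of the observation that $\Phi(p,\gamma)=(\gamma,\epsilon\,\gamma\wedge p)$ maps \eqref{dot G} onto a subsystem of the extended system \eqref{PE*}, \eqref{MOMENT*}, and that the invariant measure of Theorem \ref{RS2}, item (i), then yields $(\det\mathbf I\vert_{\mathfrak v_\gamma})^{\frac{1}{2\epsilon}-1}\mathbf w^{n-1}$ up to a constant factor. Your intermediate steps (the restriction of the ambient measure to the invariant level set $\mathcal S_0$ and the fibrewise isometry $p\mapsto\gamma\wedge p$ giving the factor $\epsilon^{n-1}$) merely make explicit what the paper compresses into the phrase that the pull-back of $\mathrm d\mathbf m\wedge\mathrm d\gamma$ is the standard volume form on $T^*S^{n-1}$.
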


\begin{proof}
The mapping
\[
\Phi\colon (\gamma,p)\mapsto (\gamma,\mathbf m), \qquad \mathbf m=\epsilon
\gamma\wedge p,
\]
together with $\omega=\frac{1}{\epsilon}\gamma\wedge\dot\gamma$,
maps the reduced system \eqref{dot G} to the subsystem of
\eqref{PE*}, \eqref{MOMENT*}, and the pull-back $\Phi^*(\mathrm d\mathbf m\wedge\mathrm d\gamma)$ is
the standard volume form $\mathbf w^{n-1}$ on $T^*S^{n-1}$ (up to the multiplication by a constant).
Now the statement follows from  Theorem \ref{RS2}, item (i).
\end{proof}

\section{Hamiltonization of the reduced system}

\subsection{Equations for the special inertia operator}
Based on the Hamiltoniazation and integrability of the reduced Veselova system \cite{FeJo},
we have the Hamiltonization and
integrability of the rubber rolling of a Chaplygin ball over a
horizontal hyperplane for a special inertia operator \eqref{spec-op} (see \cite{Jo3}). Namely, under the time substitution $d\tau
=1/\sqrt{(A\gamma,\gamma)}dt$,
the reduced system becomes an integrable Hamiltonian system
describing a geodesic flow on $S^{n-1}$
of the metric
\begin{equation} \label{dsA}
ds^2_A=\frac{1}{(\gamma, A\gamma)}\left((A
{d\gamma},{d\gamma})(A\gamma,\gamma) - (A\gamma,{d\gamma})^2\right),
\end{equation}
where $d\gamma=(d\gamma_1,\dots,d\gamma_n)^T$ \cite{Jo3}.

Now we proceed with a rolling over a sphere and, as in the case of the horizontal rolling, we suppose that the inertia operator is given by \eqref{spec-op}. Then the reduced
Lagrangian $L_{red}(\dot \gamma,\gamma)$ and the Legendre transformation
\eqref{Leg} take the form
\begin{eqnarray}
&& L_{red} = \frac 1{2\epsilon^2} \left((A\dot \gamma,\dot
\gamma)(A\gamma,\gamma)- (A\gamma,\dot \gamma)^2 \right) .\label{redL}\\
&& \label{moments} p= \frac{\partial L_{red}}{\partial\dot \gamma}
=\frac{1}{\epsilon^2} (\gamma, A \gamma) A \dot \gamma -
\frac{1}{\epsilon^2}(\dot \gamma, A \gamma) A \gamma .
\end{eqnarray}

Under conditions (\ref{psi}), relations (\ref{moments}) can be
uniquely inverted to yield
\begin{equation}
\label{extended1} \dot
\gamma=\frac{\epsilon^2}{(\gamma,A\gamma)}
\left( A^{-1}p - (p,A^{-1}\gamma) \gamma\right)\,
\end{equation}
implying that the angular velocity in terms of $(p,\gamma)$ takes the form
$$
\omega(p,\gamma)=\frac{1}{\epsilon}\gamma\wedge \dot
\gamma=\frac{\epsilon}{(\gamma,A\gamma)} \gamma\wedge A^{-1}p,
$$
and we get:
\begin{align*}
\Upsilon(p,\gamma) &=\frac{1}{\epsilon^2}\left(A\gamma\wedge
A\dot\gamma\right)\dot\gamma=\frac{1}{\epsilon^2}\left((A\dot\gamma,\dot\gamma)A\gamma-(A\gamma,\dot\gamma)A\dot\gamma\right)\\
=&\frac{\epsilon^2}{(\gamma,A\gamma)^2}(p - (p,A^{-1}\gamma)
A\gamma, A^{-1}p - (p,A^{-1}\gamma)
\gamma)A\gamma\\
&-\frac{\epsilon^2}{(\gamma,A\gamma)^2}(\gamma,p -
(p,A^{-1}\gamma) A\gamma)\left[p - (p,A^{-1}\gamma)
A\gamma\right]\\
=&\frac{\epsilon^2}{(\gamma,A\gamma)^2}\left(
(A^{-1}p,p)+(A\gamma,\gamma)(p,A^{-1}\gamma)^2 \right)A\gamma\\
&+\frac{\epsilon^2}{(\gamma,A\gamma)^2}(p,A^{-1}\gamma)
(A\gamma,\gamma)\left[p - (p,A^{-1}\gamma)
A\gamma\right],
\end{align*}
that is
$$
\Upsilon(p,\gamma)=\frac{\epsilon^2}{(\gamma,A\gamma)^2}\left(
(A^{-1}p,p)A\gamma+(p,A^{-1}\gamma) (A\gamma,\gamma)p\right).
$$

In particular,
$(\Upsilon(p,\gamma),\gamma)={\epsilon^2}(A^{-1}p,p)/{(\gamma,A\gamma)}$, and
the right hand side of equation \eqref{dot P} reads
\begin{align*}
X_p(p,\gamma)  =& \frac{(1-\epsilon)}{\epsilon}\Upsilon+\frac{(\epsilon-1)}{\epsilon}(\Upsilon,\gamma)\gamma-(\dot\gamma,p)\gamma\\
=& \frac{(1-\epsilon)\epsilon}{(\gamma,A\gamma)^2}\left(
(A^{-1}p,p)A\gamma+(p,A^{-1}\gamma)
(A\gamma,\gamma)p\right)+\frac{\epsilon(\epsilon-1)}{(\gamma,A\gamma)}(A^{-1}p,p)\gamma\nonumber\\
&-\frac{\epsilon^2}{(\gamma,A\gamma)} \left( (A^{-1}p,p) -
(p,A^{-1}\gamma) (\gamma,p)\right)\gamma.
\end{align*}

Finally, we obtain the equation
\begin{equation}\label{extended2}
\dot p=\frac{\epsilon(1-\epsilon)}{(\gamma,A\gamma)^2}\left(
(A^{-1}p,p)A\gamma+(p,A^{-1}\gamma)
(A\gamma,\gamma)p\right)-\frac{\epsilon}{(\gamma,A\gamma)}(p,A^{-1}p)\gamma.
\end{equation}

By combing Theorems \ref{RS2} and \ref{RMIM}, we get.

\begin{thm}  The reduced flow of the rubber Chaplygin ball rolling over a sphere with a inertia operator \eqref{spec-op}
on the cotangent bundle $T^*S^{n-1}$ realized with constraints
\eqref{psi} is given by equations \eqref{extended1} and \eqref{extended2}.
The system has an invariant measure
\begin{equation}\label{RSIM}
(A\gamma,\gamma)^{\frac{n-2}{2\epsilon}+2-n}\mathbf w^{n-1}.
\end{equation}
\end{thm}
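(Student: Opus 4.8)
The plan is to separate the statement into its two assertions---the explicit form of the reduced flow, and the density of the invariant measure---and to handle each by assembling results already in place.

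For the first assertion I would simply collect the computations carried out immediately above. Specializing the general reduced flow \eqref{dot G} to the inertia operator \eqref{spec-op} means inverting the Legendre transformation \eqref{moments}; under the constraints \eqref{psi} this yields \eqref{extended1}. Substituting the resulting $\dot\gamma=\dot\gamma(p,\gamma)$ into the explicit expression for $\Upsilon(p,\gamma)$, and then into the right-hand side $X_p$ of \eqref{dot P}, and finally collecting terms over the common denominator $(\gamma,A\gamma)^2$, produces \eqref{extended2}. This part is purely computational and introduces no new idea beyond the formulas just derived.

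For the invariant measure I would invoke Theorem \ref{RMIM}, which asserts that the reduced flow \eqref{dot G} preserves $(\det\mathbf I\vert_{\mathfrak v_\gamma})^{\frac{1}{2\epsilon}-1}\mathbf w^{n-1}$, and then evaluate the density $\tilde\mu_\epsilon=(\det\mathbf I\vert_{\mathfrak v_\gamma})^{\frac{1}{2\epsilon}-1}$ of \eqref{tilde mu} for the special operator. This is precisely Theorem \ref{RS2}, item (ii): for $\mathbf I(E_i\wedge E_j)=a_ia_j\,E_i\wedge E_j$ one has $\det\mathbf I\vert_{\mathfrak v_\gamma}\propto(\gamma,A\gamma)^{\,n-2}$, whence $\tilde\mu_\epsilon$ is proportional to $(\gamma,A\gamma)^{(\frac{1}{2\epsilon}-1)(n-2)}$. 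It then remains to simplify the exponent,
\[
\Big(\frac{1}{2\epsilon}-1\Big)(n-2)=\frac{n-2}{2\epsilon}-(n-2)=\frac{n-2}{2\epsilon}+2-n,
\]
which reproduces the density in \eqref{RSIM}.

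The proof is therefore essentially the combination of Theorems \ref{RMIM} and \ref{RS2}, and I do not expect a genuine obstacle: the substantive work lies upstream, in the determinant identity $\det\mathbf I\vert_{\mathfrak v_\gamma}\propto(\gamma,A\gamma)^{\,n-2}$ behind Theorem \ref{RS2}(ii) and in the identification, made in Theorem \ref{RMIM} via the map $\Phi\colon(\gamma,p)\mapsto(\gamma,\mathbf m)$ with $\mathbf m=\epsilon\,\gamma\wedge p$, of the pull-back of the standard volume $\mathrm d\mathbf m\wedge\mathrm d\gamma$ with $\mathbf w^{n-1}$. The only point requiring a little care is that Theorem \ref{RS2}(ii) fixes the density only up to a positive constant; since a constant multiple of an invariant measure is again invariant, this is harmless, and the factor $\mathbf w^{n-1}$ is unaffected.
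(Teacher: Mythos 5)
Your proposal is correct and follows essentially the same route as the paper: the flow equations \eqref{extended1}, \eqref{extended2} are obtained from the explicit Legendre inversion and substitution into \eqref{dot P} carried out just before the theorem, and the measure is exactly the combination of Theorems \ref{RMIM} and \ref{RS2}(ii), with the exponent simplification $\bigl(\tfrac{1}{2\epsilon}-1\bigr)(n-2)=\tfrac{n-2}{2\epsilon}+2-n$. Your remark that the proportionality constant in Theorem \ref{RS2}(ii) is harmless is a sensible point of care, and matches the paper's implicit treatment.
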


\subsection{The Chaplygin reducing multiplier.}
The Hamiltonian function of the reduced system takes
the form
\begin{equation}
H=\frac {\epsilon^2}{2}  \frac{ (p, A^{-1}p)}{(\gamma,A\gamma)},
\label{hamiltonian}
\end{equation}
which is unique only on the subvariety (\ref{psi}).

At the points of $T^*S^{n-1}$, the system \eqref{extended1}, \eqref{extended2} can be written in the
almost Hamiltonian form
\begin{equation}\label{AHF}
\dot x=X_H=(X_p,X_\gamma), \qquad i_{X_H}(\mathbf w+\Sigma)=dH,
\end{equation}
where $\Sigma$ is a semi-basic perturbation term, determined by the J-K term at the right hand side of \eqref{RedEq} (e.g,
see \cite{EKR, CCLM, St, Tat}). The form $\mathbf w+\Sigma$ is non-degenerate, but, in general, it is not closed.

The \emph{Chaplygin multiplier}
is a nonvanishing function $\nu$ such that $\tilde{\mathbf
w}=\nu(\mathbf w+\Sigma)$ is closed. The Hamiltonian vector field $\tilde X_H$ of the function $H$ on $(T^*S^{n-1},\tilde{\mathbf w})$
 is proportional to the original vector field:
\[
\tilde X_H=\frac{1}{\nu}X_H, \qquad
i_{\tilde
X_H}\tilde{\mathbf w}=dH.
\]
Thus, applying the time substitution $d\tau={\nu}dt$, the system \eqref{AHF} becomes the Hamiltonian system
\[
\frac{d}{d\tau}x=\tilde
X_H.
\]

On the other hand, a  classical way to introduce the Chaplygin
reducing multiplier for our system is as follows (e.g., see
\cite{Ch2, FeJo}). Consider the time substitution $d\tau={\nu}(\gamma)dt$, and denote $\gamma^{\prime}={d\gamma}/{d\tau}=\dot\gamma/\nu$. Then the Lagrangian function
transforms to $L^*(\gamma^{\prime},\gamma)=L_{red}(\nu\gamma^{\prime},\gamma)$ and we have the new momenta
$\tilde p={\partial L^*}/{\partial \gamma^\prime}=\nu p$.
The factor $\nu$ is Chaplygin reducing multiplier if under the above time reparameterization the equations
\eqref{extended1}, \eqref{extended2} become Hamiltonian in the
coordinates $(\tilde p, \gamma)$.

The existence of the  Chaplygin reducing multiplier $\nu$ implies that
the original system has
an invariant measure $\nu^{n-2}\mathbf w^{n-1}$ (e.g., see Theorem 3.5, \cite{FeJo}).
From the expression of an invariant measure \eqref{RSIM} we get the form of a possible Chaplygin multiplier:
\[
\nu(\gamma)=const \cdot (A\gamma,\gamma)^{\frac{1}{2\epsilon}-1}.
\]

Remarkably, we have.

\begin{thm}\label{main} Under the time substitution  $d\tau
=\epsilon(A\gamma,\gamma)^{\frac{1}{2\epsilon}-1}\, dt$ and an
appropriate change of momenta, the reduced system
\eqref{extended1}, \eqref{extended2} becomes a Hamiltonian system
describing a geodesic flow on $S^{n-1}$ with the metric
\begin{equation}
ds^2_{A,\epsilon}=(\gamma, A\gamma)^{\frac{1}{\epsilon}-2}
\left((A d\gamma, d\gamma)(A\gamma,\gamma) -(A\gamma, d\gamma)^2
\right). \label{dsAe}
\end{equation}
\end{thm}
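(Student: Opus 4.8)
The plan is to realize the stated time substitution as a Chaplygin reducing multiplier and then to recognize the reparametrized flow as a geodesic flow. Set $\nu(\gamma)=\epsilon(A\gamma,\gamma)^{\frac{1}{2\epsilon}-1}$, introduce the new time $\tau$ by $d\tau=\nu\,dt$, and rescale the momentum by $\tilde p=\nu p$. The constant factor $\epsilon$ plays no dynamical role — a constant rescaling of $\nu$ only rescales $\tau$ and $\tilde p$ uniformly — and is inserted precisely to normalize the final kinetic energy to the coefficient $\tfrac12$ appearing in \eqref{dsAe}. First I would note that this $\nu$ is forced by the measure: since $\nu^{n-2}=const\cdot(A\gamma,\gamma)^{(n-2)(\frac{1}{2\epsilon}-1)}=const\cdot(A\gamma,\gamma)^{\frac{n-2}{2\epsilon}+2-n}$, the density $\nu^{n-2}$ coincides with the density in the invariant measure \eqref{RSIM}. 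This is exactly the necessary condition (Theorem 3.5 of \cite{FeJo}) for $\nu$ to be a reducing multiplier, so $\nu$ is the only admissible candidate.

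Next I would identify the target Hamiltonian. Because $L_{red}$ in \eqref{redL} is homogeneous quadratic in the velocity, the substitution $\dot\gamma=\nu\gamma'$ gives $L^*(\gamma',\gamma)=L_{red}(\nu\gamma',\gamma)=\nu^2L_{red}(\gamma',\gamma)$. Using $\nu^2/\epsilon^2=(A\gamma,\gamma)^{\frac1\epsilon-2}$ and \eqref{redL}, this is
\[
L^*(\gamma',\gamma)=\tfrac12(A\gamma,\gamma)^{\frac1\epsilon-2}\big((A\gamma',\gamma')(A\gamma,\gamma)-(A\gamma,\gamma')^2\big),
\]
i.e. $L^*$ is exactly the geodesic Lagrangian of the metric \eqref{dsAe}. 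Dually, its Legendre transform is $\tilde H=\tfrac12(A\gamma,\gamma)^{1-\frac1\epsilon}(A^{-1}\tilde p,\tilde p)$ on $T^*S^{n-1}$, and substituting $\tilde p=\nu p$ with $\epsilon^2/\nu^2=(A\gamma,\gamma)^{2-\frac1\epsilon}$ shows $\tilde H=H$, with $H$ the reduced Hamiltonian \eqref{hamiltonian}. Thus the energy is preserved, and all that remains is to prove that, in the variables $(\tilde p,\gamma)$ and time $\tau$, the reduced flow \eqref{extended1}, \eqref{extended2} is the canonical Hamiltonian flow of $\tilde H$ on the symplectic submanifold \eqref{psi}.

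To establish this I would verify Hamilton's equations directly. The configuration equation is immediate: dividing \eqref{extended1} by $\nu$ and using $\epsilon^2/(\nu^2(\gamma,A\gamma))=(A\gamma,\gamma)^{1-\frac1\epsilon}$ gives $d\gamma/d\tau=(A\gamma,\gamma)^{1-\frac1\epsilon}\big(A^{-1}\tilde p-(\tilde p,A^{-1}\gamma)\gamma\big)$, which is $\partial\tilde H/\partial\tilde p$ corrected only by a term proportional to $\gamma$, i.e. the correct constrained Hamiltonian vector field on \eqref{psi}. The momentum equation is the hard part: from $\tilde p=\nu p$ one gets $d\tilde p/d\tau=(\dot\nu/\nu)p+\dot p$, where $\dot\nu/\nu=(\tfrac1\epsilon-2)(A\gamma,\dot\gamma)/(A\gamma,\gamma)$ and $\dot p$ is given by \eqref{extended2}. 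The main obstacle is to show that the gyroscopic contributions in \eqref{extended2} — precisely the terms carrying the factor $(1-\epsilon)$, which are the curvature-induced, non-Hamiltonian part coming from the J--K term in \eqref{RedEq} — are cancelled exactly by the $(\dot\nu/\nu)p$ contribution, leaving $d\tilde p/d\tau=-\partial\tilde H/\partial\gamma$ up to a multiple of $\gamma$. This cancellation is the entire content of ``$\nu$ is a reducing multiplier'', and it is consistent with, but not implied by, the measure-matching of the first paragraph.

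Equivalently, and perhaps more transparently, I would argue through the almost-Hamiltonian form \eqref{AHF}: it suffices to prove that $\tilde{\mathbf w}=\nu(\mathbf w+\Sigma)$ is closed, where $\Sigma$ is the semibasic, momentum-linear gyroscopic $2$-form. Writing $d(\nu\mathbf w)+d(\nu\Sigma)$ and splitting by the number of $dp$-factors, closedness reduces to two identities: the mixed part equates $d\nu\wedge\mathbf w$ with the $dp$-part of $d(\nu\Sigma)$, while the top $d\gamma$-degree part requires $d(\nu\Sigma)$ to have no purely horizontal component. Both are finite computations on $S^{n-1}$ using the explicit $\Sigma$ read off from \eqref{RedEq} and Lemma \ref{curvature}, and the factor $(A\gamma,\gamma)^{\frac{1}{2\epsilon}-1}$ is exactly what makes them hold. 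Once $\tilde{\mathbf w}$ is closed, $\tilde X_H=\nu^{-1}X_H$ is its Hamiltonian vector field for $H=\tilde H$, and by the second paragraph this flow is the geodesic flow of \eqref{dsAe}, completing the proof. I expect the momentum-equation cancellation — equivalently, the closedness of $\tilde{\mathbf w}$ — to be the only genuinely nontrivial step.
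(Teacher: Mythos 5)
Your framework is the same as the paper's: the multiplier $\nu(\gamma)=\epsilon(A\gamma,\gamma)^{\frac{1}{2\epsilon}-1}$, the rescaling $\tilde p=\nu p$, the identification of $L^*=\nu^2L_{red}$ with the geodesic Lagrangian of the metric \eqref{dsAe}, the Hamiltonian $H(\tilde p,\gamma)=\frac12(\gamma,A\gamma)^{1-\frac1\epsilon}(\tilde p,A^{-1}\tilde p)$, and the plan of verifying Hamilton's equations on the constrained realization of $T^*S^{n-1}$; your check of the configuration equation also matches the paper's. However, there is a genuine gap exactly where you flag it: the momentum-equation cancellation is never carried out. You write that the $(1-\epsilon)$ (curvature-induced) terms of \eqref{extended2} must be cancelled by the $(\dot\nu/\nu)p$ contribution and that ``the factor $(A\gamma,\gamma)^{\frac{1}{2\epsilon}-1}$ is exactly what makes them hold'' --- but this assertion \emph{is} the theorem, and stating it without computation is circular. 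The measure-matching argument of your first paragraph cannot close this gap either, since (as you yourself note) the existence of an invariant measure of the form $\nu^{n-2}\mathbf w^{n-1}$ is only a necessary condition for $\nu$ to be a reducing multiplier when $n\ge 4$.

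What is missing is precisely the computation that constitutes the bulk of the paper's proof: one derives the constrained Hamiltonian flow of $H$ in the variables $(\tilde p,\gamma)$ by introducing Lagrange multipliers $\lambda,\mu$ for the constraints \eqref{psi*} and solving for them by differentiating the constraints (this yields the explicit equations \eqref{ham_q}, \eqref{ham_p}); then one computes, along the actual motion \eqref{extended1} and using $(\gamma,p)=0$, that $\frac{d}{dt}(A\gamma,\gamma)^{\frac{1}{2\epsilon}-1}=(2\epsilon-1)\epsilon(A\gamma,\gamma)^{\frac{1}{2\epsilon}-2}(p,A^{-1}\gamma)$, hence $\frac{d\tilde p}{d\tau}=(2\epsilon-1)\epsilon(A\gamma,\gamma)^{-1}(p,A^{-1}\gamma)p+\dot p$; and finally one checks term by term that equating this with the right-hand side of \eqref{ham_p} rewritten in $(p,\gamma)$ reproduces \eqref{extended2}, i.e.\ that the $(2\epsilon-1)$-term from $\dot\nu$ combines with the Hamiltonian terms to produce exactly the $(1-\epsilon)$-terms of the nonholonomic equation. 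Your alternative route --- proving $d\bigl(\nu(\mathbf w+\Sigma)\bigr)=0$ --- would be equally acceptable, but it too is left as a sketch: it requires writing $\Sigma$ explicitly from the J--K term in \eqref{RedEq} and Lemma \ref{curvature} and verifying the two graded components of the closedness identity, which is a computation of the same size as the one above. As it stands, your proposal is a correct and well-organized plan whose decisive step is deferred, not a proof.
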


\begin{remark}{\rm
Note that, while reductions and invariant measures of considered nonholonomic systems are given for arbitrary inertia tensors (Sections 2 and 3), the Hamiltonization is performed only for the special one \eqref{spec-op}. This assumption implies that $\mathbf I=\mathbb I+D\mathbb E$ preserves the subset of bivectors in $so(n)$. For $n\ge 4$, it is a restrictive property, while for $n=3$ an arbitrary inertia operator can be written in the form \eqref{spec-op} and we reobtain the result of Ehlers and Koiller \cite{EK}. This is expected since only if the reduced configuration space is two-dimensional, the existence of an invariant measure is equivalent to the existence of a Chaplygin multiplier (e.g., see \cite{BBM3}).
}\end{remark}

\begin{proof} We take
$\nu(\gamma)=\epsilon(A\gamma,\gamma)^{\frac{1}{2\epsilon}-1}$, so the
Lagrangian (\ref{redL}) in the new time becomes
\begin{equation} \label{L^*}
L^*(\gamma^{\prime},\gamma)=\frac12(\gamma,
A\gamma)^{\frac{1}{\epsilon}-2}\left((A \gamma^{\prime},
\gamma^{\prime})(A\gamma,\gamma) -(A\gamma, \gamma^{\prime})^2
\right).
\end{equation}

Following the method of Chaplygin reducing multiplier, we introduce the new momenta by
considering the mapping
\begin{equation}
(p,\gamma) \longmapsto  (\tilde p,\gamma),  \qquad \tilde
p=\nu p=\epsilon(A\gamma,\gamma)^{\frac{1}{2\epsilon}-1} p\,.
\label{mapping}
\end{equation}
Under (\ref{mapping}), the Hamiltonain (\ref{hamiltonian})
transforms to
\begin{equation}\label{Ham2}
H(\tilde p,\gamma)=\frac12 (\gamma,
A\gamma)^{1-\frac{1}{\epsilon}}{ (\tilde p, A^{-1}\tilde p)}.
\end{equation}

Now, we realize
the cotangent bundle $T^*S^{n-1}$  within $\R^{2n}(\tilde p,\gamma)$:
\begin{equation}
\psi_1= (\gamma,\gamma)=1, \quad \psi_2= (\tilde p,\gamma)=0,
\label{psi*}
\end{equation}
endowed with the symplectic structure
\[
\tilde{\mathbf w}=d\tilde p_1\wedge
dq_1+\cdots+d\tilde p_n\wedge dq_n\vert_{T^*S^{n-1}}.
\]

It is convenient to obtain the Hamiltonian vector field $\tilde
X_{H}=(\tilde X_{\tilde p},\tilde X_\gamma)$ of $H$ on
$(T^*S^{n-1},\tilde{\mathbf w})$ by using the Lagrange multipliers
(e.g., see \cite{AKN}). Let
\[\mathcal H=H-\lambda \psi_1-\mu
\psi_2.
\]
Then the equations of the geodesic flow of the metric
$ds^2_{A,\epsilon}$ can be written as
\begin{eqnarray*}
&& \gamma^\prime=\tilde X_\gamma=\frac{\partial \mathcal
H}{\partial \tilde p}= (\gamma,
A\gamma)^{1-\frac{1}{\epsilon}}A^{-1}\tilde p-\mu
\gamma,\\
&& \tilde p^\prime=\tilde X_{\tilde p}=-\frac{\partial \mathcal
H}{\partial \gamma}=\frac{1-\epsilon}{\epsilon}(\gamma,
A\gamma)^{-\frac{1}{\epsilon}}(\tilde p, A^{-1}\tilde
p)A\gamma+2\lambda \gamma+\mu \tilde p,
\end{eqnarray*}
where the multipliers $\lambda$ and $\mu$ are determined by taking
the derivative of the constraints \eqref{psi*}.
The straightforward calculations yield
\begin{eqnarray*}
&& \psi_1^\prime=2(\gamma,
A\gamma)^{1-\frac{1}{\epsilon}}(A^{-1}\tilde p,\gamma)-2\mu
(\gamma,\gamma)=0,\\
&& \psi_2^\prime=(\gamma,
A\gamma)^{1-\frac{1}{\epsilon}}(A^{-1}\tilde p,\tilde p)-\mu
(\gamma,\tilde p)\\
&&\qquad+\frac{1-\epsilon}{\epsilon}(\gamma,
A\gamma)^{-\frac{1}{\epsilon}}(\tilde p, A^{-1}\tilde
p)(A\gamma,\gamma)+2\lambda (\gamma,\gamma)+\mu (\tilde
p,\gamma)=0,
\end{eqnarray*}
implying
\begin{align*}
\mu =& (\gamma, A\gamma)^{1-\frac{1}{\epsilon}}(A^{-1}\tilde
p,\gamma), \\
 2\lambda =& -(\gamma, A\gamma)^{1-\frac{1}{\epsilon}}(A^{-1}\tilde
p,\tilde p)+\frac{\epsilon-1}{\epsilon}(\gamma,
A\gamma)^{1-\frac{1}{\epsilon}}(\tilde p, A^{-1}\tilde p) \\
=& -\frac{1}{\epsilon}(\gamma,
A\gamma)^{1-\frac{1}{\epsilon}}(\tilde p, A^{-1}\tilde p).
\end{align*}

Therefore, the Hamiltonian flow of $H$ on $(T^*S^{n-1},\tilde{\mathbf w})$ takes the
form
\begin{eqnarray}
&& \gamma^\prime=(\gamma,
A\gamma)^{1-\frac{1}{\epsilon}}\left(A^{-1}\tilde p-(A^{-1}\tilde
p,\gamma)
\gamma\right),\label{ham_q}\\
&& \tilde p^\prime=\frac{1-\epsilon}{\epsilon}(\gamma,
A\gamma)^{-\frac{1}{\epsilon}}(\tilde p, A^{-1}\tilde
p)A\gamma  \label{ham_p} \\
&&\qquad\qquad -\frac{1}{\epsilon}(\gamma,
A\gamma)^{1-\frac{1}{\epsilon}}(\tilde p, A^{-1}\tilde p)
\gamma+(\gamma, A\gamma)^{1-\frac{1}{\epsilon}}(A^{-1}\tilde
p,\gamma) \tilde p.\nonumber
\end{eqnarray}

In the time $t$, after inverting the mapping (\ref{mapping}), the
equation (\ref{ham_q}) takes the form
$$
\dot\gamma\cdot \frac1\epsilon
(A\gamma,\gamma)^{1-\frac{1}{2\epsilon}}=\epsilon(A\gamma,\gamma)^{\frac{1}{2\epsilon}-1}(\gamma,
A\gamma)^{1-\frac{1}{\epsilon}}\left(A^{-1}p-(A^{-1}p,\gamma)
\gamma\right),
$$
which coincides with (\ref{extended1}). Further, from
\begin{align*}
\frac{d}{d\tau}\tilde p =&
\frac{d}{d\tau}\Big(\epsilon(A\gamma,\gamma)^{\frac{1}{2\epsilon}-1}{p}\Big)=
\frac{d}{dt}\Big(\epsilon(A\gamma,\gamma)^{\frac{1}{2\epsilon}-1}p\Big)\frac1\epsilon
(A\gamma,\gamma)^{1-\frac{1}{2\epsilon}} \\
=& \Big(p\frac{d}{dt}(A\gamma,\gamma)^{\frac{1}{2\epsilon}-1}+\dot
p(A\gamma,\gamma)^{\frac{1}{2\epsilon}-1}\Big)(A\gamma,\gamma)^{1-\frac{1}{2\epsilon}},
\end{align*}
and
\begin{align*}
\frac{d}{dt}(A\gamma,\gamma)^{\frac{1}{2\epsilon}-1} =2\big(\frac{1}{2\epsilon}-1\big)(A\gamma,\gamma)^{\frac{1}{2\epsilon}-2}(A\gamma,\dot\gamma)
= (2\epsilon-1)\epsilon(A\gamma,\gamma)^{\frac{1}{2\epsilon}-2}
(p,A^{-1}\gamma),
\end{align*}
we get
\begin{equation}\label{*}
\frac{d}{d\tau}\tilde p=(2\epsilon-1)\epsilon(A\gamma,\gamma)^{-1}
(p,A^{-1}\gamma)p+\dot p.
\end{equation}

Finally, by combining \eqref{*} with the right hand side of
\eqref{ham_p} written in variables $(p,\gamma)$,
\begin{align*}
\tilde X_{\tilde p}(p,\gamma)=&\epsilon({1-\epsilon})(\gamma,
A\gamma)^{-2}(p, A^{-1}p)A\gamma\\
&-\epsilon(A\gamma,\gamma)^{-1}( p, A^{-1} p)
\gamma+\epsilon^2(A\gamma,\gamma)^{-1}(A^{-1} p,\gamma) p,
\end{align*}
we obtain the equation (\ref{extended2}):
\begin{align*}
\dot p =& (1-2\epsilon)\epsilon(A\gamma,\gamma)^{-1}
(p,A^{-1}\gamma)p+\epsilon({1-\epsilon})(\gamma, A\gamma)^{-2}(p,
A^{-1}p)A\gamma\\
&-\epsilon(A\gamma,\gamma)^{-1}( p, A^{-1} p)
\gamma+\epsilon^2(A\gamma,\gamma)^{-1}(A^{-1} p,\gamma) p\\
=& \frac{\epsilon(1-\epsilon)}{(\gamma,A\gamma)^2}\left(
(A^{-1}p,p)A\gamma+(p,A^{-1}\gamma)
(A\gamma,\gamma)p\right)-\frac{\epsilon}{(\gamma,A\gamma)}(p,A^{-1}p)\gamma.
\end{align*}

We proved that the vector field defining the motion is
proportional to the Hamiltonian vector field
$$
X_H=(X_p,X_\gamma)=\epsilon(A\gamma,\gamma)^{\frac{1}{2\epsilon}-1}(\tilde
X_{\tilde p},\tilde X_\gamma)=\epsilon(A\gamma,\gamma)^{\frac{1}{2\epsilon}-1}\tilde X_{H},
$$
and, whence, $\nu=\epsilon(A\gamma,\gamma)^{\frac{1}{2\epsilon}-1}$
is the Chaplygin multiplier of the system.
\end{proof}

\begin{remark}{\rm
For $\epsilon=1$, \eqref{dsAe} becomes the metric for the horizontal rolling \eqref{dsA}.
The geodesic flow of the metric \eqref{dsA} is
completely integrable \cite{FeJo}.
As in the 3-dimensional case, it is possible to prove the
complete integrability of the reduced systems for $\epsilon=-1$ and arbitrary $A$, as well as for $\epsilon\ne -1$ with matrixes $A$ having additional symmetries.
We shall consider the integrability aspects of the
problem and a geometrical setting by using nonholonomic connections following \cite{Ba, DrGa, Koi} in a separate paper.
}\end{remark}

\subsection*{Acknowledgments}
The author is very grateful to Yuri Fedorov, Borislav Gaji{\'c}, and the referees for many valuable suggestions that helps the author to improve the exposition
of the results.  The research was supported by the Serbian Ministry of
Science Project 174020, Geometry and Topology of Manifolds,
Classical Mechanics and Integrable Dynamical Systems.

\end{document}